\newcommand{\mc}{\mathcal}
\newcommand{\ts}{\textsc}
\def\int{{\sf int}}
\newtheorem{thm}{Theorem}[section]
\newtheorem{cor}[thm]{Corollary}
\newtheorem{prop}[thm]{Proposition}
\theoremstyle{definition}
\newtheorem{defn}[thm]{Definition}
\newtheorem{rem}[thm]{Remark}
\newtheorem{ex}[thm]{Example}
\newtheorem{problem}{Problem}
\begin{document}

\title{Logical aspects of quantum structures}

\author{J. Harding}
\address{New Mexico State University}
\email{hardingj@nmsu.edu}
\thanks{The first listed author is partially supported by US Army grant W911NF-21-1-0247.}

\author{Z. Wang}
\address{Microsoft Station Q and Dept. of Mathematics, University of Californian at Santa Barbara, Santa Barbara, CA 93106}
\email{zhenghwa@microsoft.com}
\thanks{The second listed author is partially supported by NSF grants
FRG-1664351, CCF 2006463, and DOD Muir grant ARO W911NF-19-S-0008.}

\date{}

\subjclass[2010]{}
\keywords{}

\begin{abstract}
We survey several problems related to logical aspects of quantum structures. In particular, we consider problems related to completions, decidability and axiomatizability, and embedding problems. The historical development is described, as well as recent progress and some suggested paths forward.
\end{abstract}

\maketitle

\section{Introduction}
This note takes an overview of a number of problems related to logical aspects of quantum structures. The quantum structures we consider are motivated by the ortholattice ${\mc P}(\mc{H})$ of projection operators of a Hilbert space $\mc{H}$. These include orthomodular lattices and orthomodular posets on the more general end of the spectrum, as well as more specialized structures such as projection lattices of finite-dimensional Hilbert spaces and ortholattices of projections of von Neumann algebras.

The problems we consider are largely based on our personal experience and interests, and represent only a fragment of the the subject. Some have a long history, including the completion problem and word problem for orthomodular lattices. The intension is to provide a survey of old results, more recent developments, and some open problems. There are a few novel contributions here, but the intent is to provide easy access to areas we feel are deserving of further attention. For further reading on some of the topics discussed here, and related topics, the reader may consult \cite{DunnIntro,Fritz,Herrmann}.

The second section provides a brief background and our perspective on quantum \mbox{structures}. The third section discusses completions, the fourth section deals with matters related to decidability and axiomatizability, and the fifth section discusses embedding problems.

\section{Background}

In their 1936 paper \cite{BvN}, Birkhoff and von Neumann noted that the closed subspaces $\mc{C}(\mc{H})$ of a Hilbert space $\mc{H}$ form a lattice with an additional unary operation $\perp$ where $A^\perp$ is the closed subspace of vectors orthogonal to those in $A$. They proposed that this lattice serve as a type of non-distributive ``logic'' for a calculus of propositions involving quantum mechanical events. Husimi \cite{Husimi} noted that $\mc{C}(\mc{H})$ satisfies the identity $A\subseteq B\Rightarrow B=A\vee(A^\perp \wedge B)$ now known as the {\em orthomodular law}. This led to the study of orthomodular lattices (see \cite{Kalmbach}).

\begin{defn}
An {\em ortholattice} (abbreviated: \textsc{ol}) $(L,\wedge,\vee,0,1,')$ is a bounded lattice with a period two order-inverting unary operation $'$ that satisfies $x\wedge x'=0$ and $x\vee x'=1$. An {\em orthomodular lattice} (abbreviated: \textsc{oml}) is an \textsc{ol} that satisfies $x\leq y \Rightarrow y=x\vee(x'\wedge y)$.
\end{defn}

There were other reasons for the interest of Birkhoff and von Neumann in the lattice $\mc{C}(\mc{H})$. Birkhoff \cite{BirkhoffPG} and Menger \cite{Menger} had recently developed the lattice-theoretic view of projective geometry. This spurred von Neumann's development of his ``continuous geometry'' \cite{ContGeo} which occurred in parallel to his work with Murray \cite{MvN1,MvN2,MvN4,MvN3} on ``rings of operators'', subalgebras of the algebra $\mc{B}(\mc{H})$ of bounded operators on $\mc{H}$ that are closed under the weak operator topology. In modern terms, these rings of operators are known as {\em von Neumann algebras}. Due to the bijective correspondence between closed subspaces of $\mc{H}$ and self-adjoint projection operators of $\mc{H}$, the \textsc{oml} $\mc{C}(\mc{H})$ is isomorphic to the \textsc{ol} $\mc{P}(\mc{H})$ of self-adjoint projection operators of $\mc{H}$. In fact, for any von Neumann algebra $\mc{A}$, its self-adjoint projections $\mc{P}(\mc{A})$ form an \textsc{oml} that comes remarkably close to determining the structure of $\mc{A}$ \cite{Dye,Hamhalter}.

It is difficult to piece together the full motivation of von Neumann during this period of amazing activity. There are his published papers and notes \cite{vNcollected}, accounts of Halperin of various issues in places such as his foreword to \cite{ContGeo}, and work of R\'edei on the subject \cite{Redei}. It is fair to say that von Neumann had a mix of logical, geometric, as well as probabilistic and measure-theoretic motivations that were never completely implemented due to the onset of the second world war.

The view that the lattice $\mc{C}(\mc{H})$, or its incarnation as projections $\mc{P}(\mc{H})$, plays a key role in quantum theory seems to have been completely born out by subsequent events. Gleason showed \cite{Anatolij} that the states of a quantum system modeled by $\mc{H}$ correspond to $\sigma$-additive measures on $\mc{P}(\mc{H})$; Ulhorn's formulation of Wigner's theorem \cite{Ulhorn} characterizes projective classes of unitary and anti-unitary operators on $\mc{H}$ as automorphisms of $\mc{P}(\mc{H})$; and of course the spectral theorem describes self-adjoint operators on $\mc{H}$ as $\sigma$-additive homomorphisms from the Borel sets of the reals to $\mc{P}(\mc{H})$.

In a different vein, Mackey \cite{Mackey} took the task of motivating the structure $\mc{P}(\mc{H})$ from simple physically meaningful assumptions. He began with abstract sets $\mc{O}$ of observables, and $\mc{S}$ of states, and used $\mc{B}$ for the Borel subsets of the reals. He assumes a function $p:\mc{O}\times\mc{S}\times\mc{B}\to [0,1]$ where $p(A,\alpha,E)$ is the probability that a measurement of observable $A$ when the system is in state $\alpha$ yields a result in the Borel set $E$. He defines the set $\mc{Q}$ of {\em questions} to be those observables $A$ that take only two values 0,1, i.e. with $p(A,\alpha,\{0,1\}) = 1$ for each state $\alpha$. From minimal assumptions, he shows that $\mc{Q}$ has the structure of what we now call an orthomodular poset (see below). He then makes the quantum leap to require that the orthomodular poset associated to a quantum system is the projection lattice $\mc{P}(\mc{H})$ of a Hilbert space.

\begin{defn}
An {\em orthocomplemented poset} (abbreviated: \textsc{op}) $(P,\leq,',0,1)$ is a bounded poset with additional period two order-inverting operation $'$ so that $x\wedge x'=0$ and $x\vee x'=1$. For elements $x,y\in P$, we say that $x,y$ are {\em orthogonal} and write $x\perp y$, if $x\leq y'$.
\end{defn}

In this definition we use the meet and join symbols to indicate that the elements have a meet or join, and to also express to what this meet or join is equal.

\begin{defn}
An {\em orthomodular poset} (abbreviated: \textsc{omp}) is an \textsc{op} where every pair of orthogonal elements have a join, and that satisfies $x\leq y\Rightarrow y=x\vee (x\vee y')'$. An \textsc{omp} is called $\sigma$-complete if every countable set of pairwise orthogonal elements has a join.
\end{defn}

The area of study that uses the lens of projection lattices, or more generally \textsc{oml}s and \textsc{omp}s, to motivate and study quantum foundations is often known as ``quantum logic''. We note that Varadarajan \cite{Varadarajan} uses ``geometric quantum theory'' for this study. There is also a notion of ``quantum logic'' much more closely aligned to traditional algebraic logic, based on the idea of replacing the Boolean algebras used in classical logic with some particular type of quantum structure such as projection lattices $\mc{P}(\mc{H})$, or general \textsc{oml}s. There are different flavors of this, see for example \cite{Weaver}.

\section{The completion problem}

We group our first set of problems under the banner of ``the completion problem.'' This consists of several different problems with an obviously similar theme. While they appear quite similar, there may turn out to be substant differences in detail.

\begin{problem}
Can every \textsc{oml} be embedded into a complete \textsc{oml}?
\end{problem}

\begin{problem}
Can every \textsc{oml}/\textsc{omp} be embedded into a $\sigma$-complete \textsc{oml}/\textsc{omp}?
\end{problem}

In considering an \textsc{omp} or \textsc{oml} as a model for the events of a quantum system, it is natural to consider $\sigma$-completeness. This is directly interpreted as providing an event comprised from a countable collection of mutually exclusive events, and makes analysis using conventional techniques from probability theory tractable. Completeness implies $\sigma$-completeness, so serves the same purpose, but it has less physical motivation. Indeed, it is difficult to motivate even the existence of binary joins and meets of non-compatible events. But completeness is used in logical applications where it provides a means to treat quantifiers, so is also of interest for this reason. It is known that there are \textsc{omp}s that cannot be embedded into an \textsc{oml} (see Section~\ref{embedding}), so these questions may have different content. We begin with the following result of MacLaren \cite{MacLaren}.

\begin{thm}
The MacNeille completion of an \textsc{ol} naturally forms an \textsc{ol}.
\end{thm}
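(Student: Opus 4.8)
The plan is to realise the MacNeille completion $\ol{L}$ concretely as the complete lattice of \emph{cuts} of $L$ and then to extend the orthocomplementation $'$ to it through the orthogonality relation. Recall that for $S\subseteq L$ one writes $S^u$ and $S^l$ for its sets of upper and lower bounds, that a cut is a set $A$ with $A^{ul}=A$, and that $\ol{L}$ is the poset of cuts ordered by inclusion. This is a complete lattice in which $\bigwedge_i A_i=\bigcap_i A_i$ and $\bigvee_i A_i=(\bigcup_i A_i)^{ul}$, into which $L$ embeds via $e(x)=\{y:y\le x\}$ with $e$ preserving all existing meets and joins. It remains to produce an order-reversing involution on $\ol{L}$ that obeys the orthocomplement laws and restricts to $'$ on $e[L]$.

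For $S\subseteq L$ set $S^\perp=\{y\in L:y\perp x\text{ for all }x\in S\}$. Since $x\le y'$ is equivalent to $y\le x'$, the relation $\perp$ is symmetric, so $S\mapsto S^\perp$ is a polarity and $S\mapsto S^{\perp\perp}$ is a closure operator. The key computation, and the step I expect to carry the real weight, is the identity $S^{\perp\perp}=S^{ul}$ for every $S\subseteq L$. It drops out of the pointwise facts $S^\perp=(S^u)'$, $(T')^u=(T^l)'$, and $(T')'=T$ (where a prime on a set denotes its image under $'$): these give $(S^\perp)^u=(S^{ul})'$ and hence $S^{\perp\perp}=S^{ul}$. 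This identity says precisely that the orthoclosed sets are exactly the cuts, so $\perp$ restricts to a well-defined self-map of $\ol{L}$; being a polarity it is order-reversing, and $S^{\perp\perp}=S$ on cuts makes it an involution. Thus $(\ol{L},\subseteq)$ carries an order-reversing involution, which I shall denote $A'$ in place of $A^\perp$.

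Finally I would verify compatibility with $e$ and the two orthocomplement laws. A direct computation gives $e(x)^\perp=(e(x)^u)'=(\,\uparrow\! x)'=\,\downarrow\! x'=e(x')$, so $'$ genuinely extends the orthocomplementation of $L$. For $A\wedge A'=0$, if $z\in A\cap A'$ then $z'\in A^u$ forces $z\le z'$, whence $z=z\wedge z'=0$; since meet in $\ol{L}$ is intersection and $0$ lies in every cut, $A\cap A'=\{0\}=e(0)$. Dually, for $A\vee A'=1$, any $w\in(A\cup A')^u$ satisfies $w\in A^u$, so $w'\in A'$ and $w\ge w'$, giving $w=w\vee w'=1$; hence $(A\cup A')^{ul}=\{1\}^l=L=e(1)$. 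Both laws therefore reduce to the pointwise ol axioms $z\wedge z'=0$ and $w\vee w'=1$. As $\ol{L}$ is already known to be complete, this exhibits it as a complete ol containing $L$, and the construction is canonical in $L$, which is the content of the word ``naturally'' in the statement.
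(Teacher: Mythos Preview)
Your argument is correct. The paper itself does not give a proof of this theorem; it simply attributes the result to MacLaren and cites \cite{MacLaren}. Your approach---identifying cuts with $\perp$-closed sets via the identity $S^{\perp\perp}=S^{ul}$ and then reading off the orthocomplementation---is the standard one and is essentially MacLaren's. The chain $S^\perp=(S^u)'$, $(T')^u=(T^l)'$ checks out, as do the verifications of $A\wedge A'=e(0)$ and $A\vee A'=e(1)$, which you correctly reduce to the pointwise laws $z\wedge z'=0$ and $w\vee w'=1$ in $L$.
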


Several core results about completions are consequences of deep early results from the study of \textsc{oml}s. For the first, recall that the orthogonal of a subspace $S$ of an inner product space $V$ is given by $S^\perp=\{v:s\cdot v = 0\mbox{ for all }s\in S\}$, and that $S$ is biorthogonal if $S=S^{\perp\perp}$. Amemiya and Araki \cite{Amemiya-Araki} provided the following influential result viewed in part as justification of the use of Hilbert space in quantum mechanics.

\begin{thm}
The \textsc{ol} of biorthogonal subspaces of an inner product space is orthomodular iff the inner product space is complete, i.e. is a Hilbert space.
\end{thm}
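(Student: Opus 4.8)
The plan is to prove the two directions separately, with the forward direction (completeness $\Rightarrow$ orthomodularity) being routine and the converse carrying all the weight. Throughout, write $V$ for the inner product space and $H=\hat V$ for its completion, and recall that the biorthogonal subspaces of $V$ form an \textsc{ol} under $\wedge=\cap$, $A\vee C=(A+C)^{\perp\perp}$, and orthocomplementation $A\mapsto A^\perp$. I will also use the standard reformulation of the orthomodular law: an \textsc{ol} is an \textsc{oml} iff $a\leq b$ together with $a'\wedge b=0$ force $a=b$.

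For the easy direction, suppose $V=H$ is a Hilbert space. Then for any subspace $S$ one has $S^{\perp\perp}=\overline{S}$, so the biorthogonal subspaces are exactly the closed subspaces. Given closed $A\subseteq B$, the projection theorem applied inside the Hilbert space $B$ yields $B=A\oplus(A^\perp\cap B)$ as an orthogonal direct sum; since an orthogonal sum of two closed subspaces is again closed, this sum is already biorthogonal and equals its own biorthogonal closure, whence $B=A\vee(A^\perp\wedge B)$. Thus the lattice is orthomodular.

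For the converse I would argue contrapositively: assuming $V$ is incomplete, I would manufacture biorthogonal subspaces $A\subsetneq B$ with $A^\perp\wedge B=0$, which by the reformulation above exhibits a failure of the orthomodular law. The mechanism is to exploit a vector of $H$ missing from $V$. Working inside $H$, the plan is to select a closed subspace $P\subseteq H$ in sufficiently general position relative to the dense subspace $V$ that both $V\cap P$ is dense in $P$ and $V\cap P^\perp$ is dense in $P^\perp$, yet some $v_0\in V$ has its $P$-component $p_0$ (and hence its $P^\perp$-component $p_0'=v_0-p_0$) lying outside $V$. The density conditions guarantee, via the identity $(V\cap P)^{\perp}=V\cap P^\perp$ obtained by passing to closures in $H$, that $A:=V\cap P$ is biorthogonal in $V$ with $A^\perp=V\cap P^\perp$. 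Setting $B:=V\cap(P\oplus\mathbb{C}p_0')$, one checks similarly that $B$ is biorthogonal, that $A\subsetneq B$ since $v_0\in B\setminus A$, and that $A^\perp\wedge B=V\cap P^\perp\cap(P\oplus\mathbb{C}p_0')=V\cap\mathbb{C}p_0'=0$ because $p_0'\notin V$. This is exactly the required orthomodular failure.

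The main obstacle, and the genuine content of the theorem, is the construction of $P$: one must produce a closed subspace of the completion that meets $V$ densely on both sides while capturing a vector whose components escape $V$. Naive attempts fail: for instance, taking $P$ one-dimensional and spanned by a missing vector gives $V\cap P=0$, so $V$ does not meet $P$ densely. The remedy is to build $P$ from a non-convergent Cauchy sequence in $V$, arranging a ``rotation'' of two orthonormal families in $V$ whose limiting direction is not realized in $V$; this is where incompleteness is used essentially, and it is precisely the delicate point isolated by Amemiya and Araki. I would organize this final step as a separate lemma asserting that incompleteness yields a biorthogonal subspace $M$ with $M+M^\perp\neq V$, and then feed $M$ into the construction above.
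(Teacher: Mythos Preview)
The paper does not supply a proof of this theorem: it is stated with attribution to Amemiya and Araki and then invoked as a black box in the subsequent corollary. There is thus no argument in the paper against which to compare yours.

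Judged on its own, your outline is the standard Amemiya--Araki strategy. The forward direction is fine. For the converse you correctly recast orthomodularity as the absence of a pair $A\subsetneq B$ of biorthogonal subspaces with $A^\perp\wedge B=0$, and you correctly isolate the crux as the lemma that an incomplete $V$ admits a biorthogonal $M$ with $M+M^\perp\neq V$. But you do not prove that lemma, and it is precisely the substantive content of the Amemiya--Araki paper; what you have written is therefore a reduction to their result rather than an independent proof. Two smaller points also deserve care. First, the claim that $B=V\cap(P\oplus\mathbb{C}p_0')$ is biorthogonal is not quite ``similar'' to the case of $A$: it requires the additional (true, but not entirely trivial) fact that a dense linear subspace of an inner product space meets every closed hyperplane densely, applied inside $P^\perp$ with the hyperplane $\{p_0'\}^\perp$. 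Second, once you have the lemma about $M$, the passage back to your earlier setup --- presumably by taking $P$ to be the closure of $M$ in $H$ --- still needs the verification that $V\cap P^\perp$ is dense in $P^\perp$, and this is not automatic from $M$ being biorthogonal in $V$.
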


\begin{cor}\label{zx}
The MacNeille completion of an \textsc{oml} need not be an \textsc{oml}.
\end{cor}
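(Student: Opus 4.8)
The plan is to produce a single \textsc{oml} whose MacNeille completion is an \textsc{ol} that fails the orthomodular law, and to manufacture that failure through the Amemiya--Araki theorem. First I would fix an incomplete inner product space $V$ (for instance the finitely supported sequences inside $\ell^2$), and let $C(V)$ denote the complete \textsc{ol} of biorthogonal subspaces of $V$, with orthocomplementation $S\mapsto S^\perp$ and meets given by intersection. Since $V$ is not complete, Amemiya--Araki tells us immediately that $C(V)$ is \emph{not} orthomodular. The whole task is therefore to realize $C(V)$ as the MacNeille completion of some genuine \textsc{oml} sitting densely inside it.

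The candidate is $L=\{S\in C(V): \dim S<\infty \text{ or } \dim S^\perp<\infty\}$, the finite-dimensional and finite-codimensional biorthogonal subspaces. I would first check that $L$ is a sub-\textsc{ol} of $C(V)$: it is visibly closed under $\perp$, the intersection of two members again has finite dimension or finite codimension, and closure under join then follows by De Morgan. The substance is proving $L$ orthomodular, for which I would use the standard reformulation that an \textsc{ol} is an \textsc{oml} iff $x\leq y$ together with $x'\wedge y=0$ forces $x=y$. Given such $x\le y$ in $L$, note that a finite-codimensional biorthogonal subspace $S$ splits as $V=S\oplus S^\perp$, since $S^\perp$ is finite-dimensional. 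If $x$ is finite-dimensional then $x'\wedge y = x^\perp\cap y=0$ embeds $y$ into $V/x^\perp\cong x$, giving $\dim y\le \dim x$ and hence $x=y$; if $x$ is finite-codimensional then $y$ is as well, and the modular law in the full subspace lattice applied to $V=x\oplus x^\perp$ yields $y=x+(y\cap x^\perp)=x$. Thus $L$ is an \textsc{oml}.

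Finally I would identify $C(V)$ as the MacNeille completion of $L$ by showing $L$ is both join-dense and meet-dense in it. Every $A\in C(V)$ is the join in $C(V)$ of the finite-dimensional subspaces it contains, since the linear span of these is $A$ and $A=A^{\perp\perp}$; dually, applying this to $A^\perp$ and taking orthocomplements exhibits $A$ as a meet of finite-codimensional members of $L$. As $C(V)$ is a complete \textsc{ol} containing the sub-\textsc{ol} $L$ join- and meet-densely, it is (by uniqueness of the MacNeille completion, which carries an \textsc{ol} structure by MacLaren's theorem) the MacNeille completion of $L$. Since this completion $C(V)$ is not orthomodular, the corollary follows. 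I expect the orthomodularity of $L$ to be the main obstacle: it is exactly the point where one must see that the pathology forced by incompleteness lives only among the infinite-dimensional, infinite-codimensional subspaces and is therefore invisible to the finite/cofinite sub-\textsc{oml}.
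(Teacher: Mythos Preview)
Your proposal is correct and follows essentially the same route as the paper: take $L$ to be the finite/cofinite-dimensional (biorthogonal) subspaces of an incomplete inner product space $V$, observe that $L$ is join- and meet-dense in the complete \textsc{ol} of biorthogonal subspaces of $V$, and invoke Amemiya--Araki to see that this MacNeille completion fails orthomodularity. The only difference is in how you justify that $L$ is an \textsc{oml}: the paper simply remarks that $L$ is a \emph{modular} \textsc{ol} (since joins and meets among finite/cofinite-dimensional subspaces are just algebraic sums and intersections, so $L$ sits as a sublattice of the modular lattice of all subspaces), whereas you verify orthomodularity directly via the ``$x\leq y,\ x'\wedge y=0\Rightarrow x=y$'' criterion and a case split --- a somewhat longer argument that in effect reproves the relevant fragment of modularity.
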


\begin{proof}
Let $L$ be the \textsc{ol} of finite and co-finite dimensional subspaces of an incomplete inner product space $V$. This is a modular \textsc{ol}, hence an \textsc{oml}.  Since $L$ is join and meet dense in the \textsc{ol} $\overline{L}$ of biorthogonal subspaces of $V$ and $\overline{L}$ is complete, we have that $\overline{L}$ is the MacNeille completion of $L$.
\end{proof}

\begin{rem}\label{bn}
The \textsc{oml} $L$ of Corollary~\ref{zx} can be embedded into a complete \textsc{oml}. Embed $V$ into a complete inner product space $\overline{V}$. Then $L$ embeds into the subalgebra of finite or co-finite dimensional subspaces of $\overline{V}$, and this is a subalgebra of the \textsc{oml} of all closed subspaces of $\overline{V}$.
\end{rem}

A deep result of Kaplansky \cite{Kaplansky} settles negatively the situation for modular ortholattices (abbreviated: \textsc{mol}s). We recall that every \textsc{mol} is an \textsc{oml} but not conversely. For space, we will not provide the details of continuous geometries, the reader can see \cite{ContGeo}. The crucial fact we use is that a continuous geometry $L$ has a dimension function $d:L\to [0,1]$ that satisfies among other conditions $d(x\vee y) = d(x) + d(y) - d(x\wedge y)$.

\begin{thm}
A complete, directly irreducible \textsc{mol} is a continuous geometry.
\end{thm}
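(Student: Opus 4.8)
The plan is to unwind the definition of \emph{continuous geometry} and verify the only facts that are not immediate. Recall that an (irreducible) continuous geometry is a complete, complemented, modular lattice satisfying the two continuity laws: for every element $a$ and every upward directed family $\{b_i\}$ one has $a\wedge\bigvee_i b_i=\bigvee_i(a\wedge b_i)$, and dually $a\vee\bigwedge_i b_i=\bigwedge_i(a\vee b_i)$ for downward directed families. For a complete \textsc{mol} $L$, completeness and modularity are hypotheses, and complementation is free, since $x'$ is a complement of $x$. Direct irreducibility is what lets us land in a genuine irreducible continuous geometry carrying a single normalized dimension function $d:L\to[0,1]$, and it is precisely the setting in which \emph{perspectivity} is well behaved. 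Thus the entire content of the theorem is the two continuity laws.

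First I would use the orthocomplementation to halve the work. The map $x\mapsto x'$ is an order-reversing involution, hence a dual isomorphism $L\cong L^{\mathrm{op}}$; combined with the De Morgan laws $(\bigvee_i x_i)'=\bigwedge_i x_i'$, valid in any complete \textsc{ol}, it transports meet-continuity to join-continuity. Hence the two continuity laws are equivalent, and it suffices to establish one of them, say meet-continuity $a\wedge\bigvee_i b_i=\bigvee_i(a\wedge b_i)$ for $\{b_i\}$ upward directed. (It is worth noting that self-duality is essential here: the lattice of \emph{all} subspaces of an infinite-dimensional space is a complete complemented modular lattice that is meet-continuous but not join-continuous, and it admits no orthocomplementation for exactly this reason.)

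The crux is this single continuity law. The inequality $\bigvee_i(a\wedge b_i)\le a\wedge\bigvee_i b_i$ is automatic. For the reverse, put $b=\bigvee_i b_i$ and $c=\bigvee_i(a\wedge b_i)$, note $c\le a\wedge b$, and suppose $c<a\wedge b$. Since complemented modular lattices are relatively complemented, $c$ has a nonzero complement $e$ in the interval $[0,a\wedge b]$; a short modular-law computation then gives $e\le a$, $e\le b$, and, because $e\wedge b_i\le e\wedge c=0$, that $e\wedge b_i=0$ for every $i$. So meet-continuity reduces to the assertion that \emph{no} nonzero $e$ can satisfy $e\le\bigvee_i b_i$ while $e\wedge b_i=0$ for all $i$ in an upward directed family. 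To kill such an ``infinitesimal'' $e$ I would develop von Neumann's perspectivity theory inside $L$: use direct irreducibility (triviality of the center) to obtain transitivity of perspectivity, build the additive dimension function $d$, and derive a contradiction from $d(e)>0$ against $d(e)=\sup_i d(e\wedge b_i)=0$.

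The hard part is exactly this last step. Completeness and modularity alone do not suffice, as the continuity laws can genuinely fail, so the argument must exploit both the self-duality furnished by $'$ and direct irreducibility. Concretely, the deepest obstacle is the passage to infinite families: proving that perspectivity is transitive and that the resulting dimension function behaves additively and monotonically over arbitrary suprema and infima, all \emph{without presupposing the very continuity one is trying to prove}. Managing this circularity — running von Neumann's dimension machinery from completeness and the orthocomplement rather than from an assumed continuity axiom — is where the real weight of Kaplansky's argument lies.
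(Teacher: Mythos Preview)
The paper does not prove this theorem; it is quoted as Kaplansky's result and cited to \cite{Kaplansky}. So there is no proof in the paper to compare against, and the right question is whether your outline stands on its own.

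Your opening moves are exactly right and match Kaplansky's: the orthocomplementation gives a dual automorphism, so the two continuity laws are equivalent and it suffices to prove one; and the relative-complement reduction to an ``infinitesimal'' $e\ne 0$ with $e\le\bigvee_i b_i$ and $e\wedge b_i=0$ for all $i$ is correct (your justification $e\wedge b_i\le a\wedge b_i\le c$ and $e\wedge b_i\le e$, hence $e\wedge b_i\le e\wedge c=0$, is fine).

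The genuine gap is your endgame. You propose to build the dimension function $d$ and then derive a contradiction from ``$d(e)=\sup_i d(e\wedge b_i)=0$''. Two problems. First, the identity $d(e)=\sup_i d(e\wedge b_i)$ is not a general feature of a dimension function; it is essentially equivalent to the continuity you are trying to prove. Second, and more fundamentally, von Neumann's construction of $d$---transitivity of perspectivity, additivity, and the passage to arbitrary suprema---presupposes the continuity axioms. You acknowledge the circularity, but the proposal offers no mechanism for breaking it. Kaplansky does \emph{not} build $d$ first: his argument develops an intrinsic theory of ``finite'' versus ``properly infinite'' elements in a complete complemented modular lattice, using the orthocomplementation at several points beyond mere self-duality (for instance, to compare an element with its orthocomplement and to control infinite orthogonal families), and only \emph{after} continuity is established does the dimension function become available. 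So as written your sketch correctly locates the difficulty but does not contain the idea that overcomes it.
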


\begin{cor}\label{mkk}
There is a \textsc{mol} that cannot be embedded into a complete \textsc{mol}.
\end{cor}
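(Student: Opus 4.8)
The plan is to take for $L$ the same \textsc{mol} used in Corollary~\ref{zx}: fix an infinite-dimensional inner product space $V$ and let $L$ be the \textsc{ol} of finite and cofinite dimensional subspaces of $V$, which is modular and hence a \textsc{mol}. Suppose, toward a contradiction, that $L$ embeds via $\iota$ into a complete \textsc{mol} $M$. The idea is to transport Kaplansky's dimension function back along $\iota$ to obtain a faithful, normalized valuation on $L$, and then to contradict the existence of such a valuation using the fact that $L$ contains finite dimensional subspaces of every dimension whose atoms are all projective to one another.

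Concretely, I would first reduce to the case that $M$ is directly irreducible, so that the preceding theorem of Kaplansky applies and presents $M$ as a continuous geometry with dimension function $d\colon M\to[0,1]$ satisfying $d(0)=0$, $d(1)=1$, $d(x)=0$ iff $x=0$, and $d(x\vee y)+d(x\wedge y)=d(x)+d(y)$. Setting $d'=d\circ\iota\colon L\to[0,1]$ and using that $\iota$ preserves $\wedge$ and $\vee$, one checks immediately that $d'$ is a valuation, that $d'(0)=0$ and $d'(V)=1$, and that $d'$ is faithful, since $d'(x)=0$ iff $\iota x=0$ iff $x=0$ by injectivity of $\iota$ and faithfulness of $d$.

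The contradiction is then internal to $L$. Any two atoms $p,q$ of $L$ are $1$-dimensional subspaces lying in a common $2$-dimensional subspace $W\in L$; choosing a third atom $r\le W$ distinct from $p,q$ gives $p\vee r=q\vee r=W$ and $p\wedge r=q\wedge r=0$, so the valuation identity forces $d'(p)=d'(q)$. Hence a single value $\alpha=d'(p)$ is common to all atoms, and $\alpha>0$ by faithfulness. Choosing independent vectors $e_1,e_2,\dots$ and writing $W_k=\langle e_1,\dots,e_k\rangle$, the relation $W_k=W_{k-1}\vee\langle e_k\rangle$ with $W_{k-1}\wedge\langle e_k\rangle=0$ yields $d'(W_k)=k\alpha$ by induction; monotonicity of $d'$ then gives $k\alpha=d'(W_k)\le d'(V)=1$ for every $k$, forcing $\alpha=0$, a contradiction. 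Note the homogeneity step is essential: the mere chain relation only bounds $\sum_k d'(\langle e_k\rangle)\le 1$, which is consistent with a convergent series, so it is projectivity of the atoms that does the real work.

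The main obstacle is the reduction to the directly irreducible case, since a complete \textsc{mol} need not decompose as a product of irreducible factors: its center is a complete Boolean algebra that may be atomless. I would handle this using that $L$ is simple as a lattice, which follows because the general linear group of $V$ acts transitively on the atoms of $L$ by automorphisms and a standard argument shows any nontrivial congruence then collapses $L$. Consequently, for each nonzero central $z\in M$ the composite $x\mapsto \iota(x)\wedge z$ is an ortholattice homomorphism of $L$ into the complete \textsc{mol} $[0,z]$ which is nonconstant, as it sends $0$ to $0$ and $V$ to $z\neq 0$, hence injective. This lets one pass to an irreducible central interval; in the genuinely atomless case one instead replaces the real-valued dimension by von Neumann's center-valued dimension composed with a suitable state on the center, and then reruns the argument above. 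Making this last reduction fully rigorous is the delicate point of the proof.
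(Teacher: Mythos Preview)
Your argument is essentially the same as the paper's: you take the same \textsc{mol} $L$ of finite/cofinite dimensional subspaces, and in both cases the contradiction comes from the fact that $L$ contains an infinite pairwise orthogonal family of atoms, any two of which are perspective, so that the dimension function forced by Kaplansky's theorem assigns them all a common positive value $\alpha$ while $k\alpha\le 1$ for every $k$. The paper phrases this slightly differently --- it argues directly in $M$, using that an \textsc{ol} embedding preserves orthogonality, nonzeroness, and common complements (hence perspectivity), so the image of that family is again an infinite pairwise orthogonal perspective family of nonzero elements, which a continuous geometry cannot contain --- whereas you pull the dimension back to $L$ and compute there. The content is identical; your version is just more explicit about the mechanism.

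On your worry about irreducibility: the paper's proof does not address this either, and in fact it is not needed. Kaplansky's theorem, as the title of the cited paper indicates, is that \emph{every} orthocomplemented complete modular lattice is a continuous geometry, not only the directly irreducible ones. In the reducible case the dimension is center-valued rather than real-valued, but the property actually used --- perspective elements have equal dimension, and an infinite orthogonal family of elements of common nonzero dimension is impossible --- holds in any continuous geometry. So the reduction you flag as the ``delicate point'' simply dissolves once you invoke Kaplansky's result in its full strength; there is no need to pass to an irreducible piece or to produce a state on the center.
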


\begin{proof}
Let $L$ be the \textsc{ol} of all subspaces $A$ of a Hilbert space where $A$ or $A^\perp$ is finite-dimensional. It is well-known that lattice operations with such subspaces are obtained via intersection and sum of subspaces, so this is a \textsc{mol}. The atoms of $L$ are 1-dimensional subspaces. It is easily seen that any two atoms have a common complement, i.e. are perspective. So there is an infinite set of pairwise orthogonal pairwise perspective elements in $L$. But a continuous geometry cannot have such a set of elements since they would all have the same non-zero dimension because of being perspective.
\end{proof}

So \textsc{mol}s do not admit completions, and for \textsc{oml}s the MacNeille completion does not always remain within \textsc{oml}. To gain a better understanding of the situation for \textsc{oml}s, we can further limit expectations.

\begin{defn}
An order embedding of posets $\varphi:P\to Q$ is {\em regular} if it preserves all existing joins and meets.
\end{defn}

The following result is found in \cite{hardingcanonical}, and is based on a result of Palko \cite{Palko}. The proof of the result extends in an obvious way to other situations. For instance, it shows that there is an \textsc{omp} that cannot be regularly embedded into a $\sigma$-complete \textsc{omp}.

\begin{thm}
A regular completion of an \textsc{oml} factors as a pair of regular embeddings through the MacNeille completion.
\end{thm}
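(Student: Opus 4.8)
The plan is to produce the factorization $\varphi=\psi\circ\iota$, where $\iota\colon L\to\overline L$ is the canonical embedding into the MacNeille completion and $\psi\colon\overline L\to C$ is a regular embedding extending $\varphi$. I would realize $\overline L$ concretely as the lattice of cuts, so that each $a\in\overline L$ is determined by $D_a=\{x\in L:\iota(x)\le a\}$ and $U_a=\{y\in L:a\le\iota(y)\}$, where $U_a=D_a^{\,u}$ and $D_a=U_a^{\,l}$ (polars taken in $L$); recall that $\iota$ is a regular embedding in which $L$ is both join- and meet-dense, so $a=\bigvee_{\overline L}\iota[D_a]=\bigwedge_{\overline L}\iota[U_a]$, and that by MacLaren's theorem $\overline L$ is a complete \textsc{ol} with $\iota$ preserving $'$. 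Given the regular completion $\varphi\colon L\to C$ into a complete \textsc{oml}, I would define $\psi$ by the lower extension $\psi(a)=\bigvee_C\varphi[D_a]$, writing $t_a=\bigwedge_C\varphi[U_a]$ for the companion upper extension. A one-line computation (if $\iota(x)\le a\le\iota(y)$ then $x\le y$, hence $\varphi(x)\le\varphi(y)$) gives $\psi(a)\le t_a$.

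The heart of the argument, and the step I expect to be the main obstacle, is the reverse inequality $t_a\le\psi(a)$, i.e.\ $\psi(a)=t_a$. The key observation is that in any \textsc{ol} the only common upper bound in $L$ of $D_a$ and $U_a'=\{y':y\in U_a\}$ is $1$: if $w$ bounds both then $w\in D_a^{\,u}=U_a$ while $w'\in U_a^{\,l}=D_a$, so $w'\le w$ and thus $w=1$. Hence $\bigvee^L(D_a\cup U_a')=1$ genuinely exists in $L$, and regularity of $\varphi$ forces $\bigvee_C\varphi[D_a\cup U_a']=\varphi(1)=1$. Applying De~Morgan in the complete \textsc{ol} $C$ to the complementary family, I obtain $\bigwedge_C\varphi[D_a'\cup U_a]=0$, that is $\psi(a)'\wedge t_a=0$. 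Since $\psi(a)\le t_a$, orthomodularity of $C$ now yields $t_a=\psi(a)\vee(\psi(a)'\wedge t_a)=\psi(a)$. This is exactly where the completion being an \textsc{oml}, rather than a mere \textsc{ol}, is indispensable: the analogous equality fails for \textsc{ol}s, where a regular embedding really can leave a gap between the lower and upper extensions.

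With $\psi(a)=t_a$ in hand, the remaining verifications are routine and I would keep them brief. Evaluating at $a=\iota(x_0)$ gives $\psi(\iota(x_0))=\bigvee_C\varphi[{\downarrow}x_0]=\varphi(x_0)$, so $\psi$ extends $\varphi$. Order-reflection, hence injectivity, falls out of the two descriptions of $\psi$: if $\psi(a)\le\psi(b)=t_b$ then each $x\in D_a$ satisfies $\varphi(x)\le\bigwedge_C\varphi[U_b]$, so $x\in U_b^{\,l}=D_b$, giving $a\le b$. The lower extension preserves arbitrary joins and the upper extension arbitrary meets, so their common value $\psi$ preserves both, and is therefore regular; finally $\psi(a')=\bigvee_C\varphi[U_a']=(\bigwedge_C\varphi[U_a])'=t_a'=\psi(a)'$ shows $\psi$ preserves $'$. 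Thus $\psi$ is a regular embedding and $\varphi=\psi\circ\iota$ exhibits the desired factorization through $\overline L$.
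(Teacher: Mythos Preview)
Your argument correctly isolates and executes the key step: using regularity of $\varphi$ together with orthomodularity of $C$ to collapse the gap between the lower extension $\psi^-(a)=\bigvee_C\varphi[D_a]$ and the upper extension $\psi^+(a)=\bigwedge_C\varphi[U_a]$. This is precisely the mechanism behind the result, and your verifications that $\psi$ extends $\varphi$, reflects order, and preserves $'$ are all fine.

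There is, however, a gap in the last paragraph. The assertion that ``the lower extension preserves arbitrary joins and the upper extension arbitrary meets'' is not a general fact about extensions from $L$ to its MacNeille completion. For a monotone (even regular) $\varphi$, the map $a\mapsto\bigvee_C\varphi[D_a]$ need not preserve joins: if $x\in D_{\bigvee_i a_i}=(\bigcup_i D_{a_i})^{ul}$ there is no general reason that $\varphi(x)\le\bigvee_C\varphi[\bigcup_i D_{a_i}]$, and the equality $\psi^-=\psi^+$ on cuts does not by itself force it either. So regularity of $\psi$ is not routine.

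The repair is to notice that your central computation proves more than you used. Your observation that the only upper bound in $L$ of $D_a\cup U_a'$ is $1$ works verbatim with $D_a$ replaced by an \emph{arbitrary} $S\subseteq L$ and $U_a$ by $S^u$: if $w$ bounds $S$ then $w\in S^u$, and if $w$ also bounds $(S^u)'$ then $w'\in S^{ul}$, whence $w'\le w$ and $w=1$. Thus $\bigvee^L\!\big(S\cup (S^u)'\big)=1$, and regularity of $\varphi$ plus orthomodularity of $C$ give $\bigvee_C\varphi[S]=\bigwedge_C\varphi[S^u]$ for \emph{every} $S\subseteq L$. Taking $S=\bigcup_i D_{a_i}$, so that $S^u=\bigcap_i U_{a_i}=U_{\bigvee_i a_i}$, yields
\[
\textstyle\bigvee_i\psi(a_i)=\bigvee_C\varphi[S]=\bigwedge_C\varphi[S^u]=\psi^+\big(\bigvee_i a_i\big)=\psi\big(\bigvee_i a_i\big),
\]
and preservation of arbitrary meets follows by De~Morgan since $\psi$ respects $'$. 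With this strengthening of your key lemma in place, the proof is complete and is the intended argument.
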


\begin{cor}
There is no regular completion for \textsc{oml}s.
\end{cor}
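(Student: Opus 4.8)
The plan is to argue by contradiction, playing the two immediately preceding results against each other: the factoring theorem, which says that any regular completion of an \textsc{oml} passes through its MacNeille completion via a pair of regular embeddings, and Corollary~\ref{zx}, which exhibits an \textsc{oml} whose MacNeille completion is not orthomodular. The key mechanism is that a regular embedding \emph{out of} a complete ortholattice transfers orthomodularity backward: if the target is an \textsc{oml}, then so is the source. Feeding the bad example of Corollary~\ref{zx} into this machine produces the contradiction.

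Concretely, first I would fix the \textsc{oml} $L$ of Corollary~\ref{zx}, namely the \textsc{ol} of finite and co-finite dimensional subspaces of an incomplete inner product space $V$, whose MacNeille completion is the complete \textsc{ol} $\overline{L}$ of biorthogonal subspaces of $V$. By the theorem of Amemiya and Araki, since $V$ is not complete, $\overline{L}$ is \emph{not} orthomodular. Now suppose, for contradiction, that $L$ admits a regular completion, i.e. a regular ortholattice embedding of $L$ into a complete \textsc{oml} $C$.

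Next, by the factoring theorem this embedding factors as a pair of regular embeddings $L \hookrightarrow \overline{L} \hookrightarrow C$; write $\varphi\colon \overline{L}\to C$ for the second one. The crux is to show that $\overline{L}$ inherits orthomodularity from $C$. Because $\overline{L}$ is complete, every meet and join in $\overline{L}$ exists, so regularity of $\varphi$ forces $\varphi$ to preserve them; together with the fact that $\varphi$ preserves $'$, this gives, for any $x\leq y$ in $\overline{L}$, the equality $\varphi\bigl(x\vee(x'\wedge y)\bigr)=\varphi(x)\vee\bigl(\varphi(x)'\wedge \varphi(y)\bigr)$, where the meet and join on the right are computed in $C$. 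Applying orthomodularity in the \textsc{oml} $C$ yields $\varphi(x)\vee\bigl(\varphi(x)'\wedge \varphi(y)\bigr)=\varphi(y)$, and injectivity of $\varphi$ then gives $x\vee(x'\wedge y)=y$ in $\overline{L}$. Hence $\overline{L}$ is orthomodular.

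This contradicts the conclusion, drawn from Amemiya--Araki, that $\overline{L}$ is not an \textsc{oml}. Therefore $L$ has no regular completion, so there is no regular completion for \textsc{oml}s. The one delicate point, and the step I would be most careful about, is the backward transfer of orthomodularity in the previous paragraph: it rests essentially on the completeness of the intermediate object $\overline{L}$, which guarantees that the meets and joins appearing in the orthomodular identity actually exist there, so that regularity of $\varphi$ can be invoked to identify them with the corresponding operations in $C$. Without completeness of $\overline{L}$ there would be no reason for the identity valid in $C$ to descend, and the argument would break down.
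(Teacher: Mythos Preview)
Your proof is correct and matches the intended argument: the paper presents this corollary without proof, treating it as immediate from the factoring theorem together with Corollary~\ref{zx}, and your argument is precisely the natural way to spell out that implication. One minor remark: the backward transfer of orthomodularity along $\varphi$ only needs that $\overline{L}$ is a \emph{lattice} (so that the binary operations in the orthomodular law are defined), not full completeness---equivalently, $\varphi$ is then an \textsc{ol}-homomorphism and \textsc{oml} is a variety, so the image of $\overline{L}$ is a sub-\textsc{ol} of $C$, hence an \textsc{oml}, hence so is $\overline{L}$.
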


\begin{rem}
The examples discussed so far have all involved the \textsc{oml} of subspaces of a Hilbert space. There is an alternate source of examples of interest for the completion problem. A construction of Kalmbach \cite{Kalmbach} builds from a bounded poset $P$ an \textsc{omp} $K(P)$. In the case that $P$ is a lattice, $K(P)$ is an \textsc{oml}. This construction works by {\em gluing} the free Boolean extensions of chains of $P$. The forgetful functor ${\sf U}:{\sf Omp} \to {\sf Pos}$ takes the category of orthomodular posets and maps that preserve orthocomplementation and finite orthogonal joins to the category of bounded posets and order-preserving maps. The Kalmbach construction provides an adjoint ${\sf K}:{\sf Pos}\to{\sf Omp}$ to the forgetful functor \cite{HardKalmbach}. Effect algebras are the Eilenberg-Moore category over this monad \cite{Jenca}.

There is a simple condition on a lattice $L$ equivalent to the MacNeille completion of $K(L)$ being an \textsc{oml}. This provides a rich source of relatively transparent examples. For any lattice completion $C$ of $L$, we have that the MacNeille completion of $K(C)$ is an \textsc{oml}, and this provides a completion of $K(L)$. Again, as in Remark~\ref{bn}, a completion of an \textsc{oml} is obtained by first completing some underlying structure.
\end{rem}

We turn our attention to a method of completing \textsc{ol}s that is far from regular in that it destroys all existing joins except those that are essentially finite. This is termed the canonical completion. The canonical completion has a long history originating with completing Boolean algebras with operators. Here, the canonical completion of a Boolean algebra $B$ is realized as the embedding into the power set of the Stone space of $B$. For its application to \textsc{ol}s, see \cite{MaiJohn,hardingcanonical}.

\begin{defn}
For an \textsc{ol} $L$, we say an embedding of $L$ into a complete \textsc{ol} $L^\sigma$ is a canonical completion of $L$ if every element of $L^\sigma$ is a join of meets of elements of $L$ and for each $S,T\subseteq L$, if $\bigwedge S\leq \bigvee T$, then there are finite $S'\subseteq S$ and $T'\subseteq T$ with $\bigwedge S'\leq\bigvee T'$.
\end{defn}

\begin{thm}
Every \textsc{ol} has a canonical completion, and this is unique up to unique commuting isomorphism.
\end{thm}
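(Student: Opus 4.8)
The plan is to reduce the statement to the known theory of canonical extensions of bounded lattices and then account separately for the orthocomplementation. First I would forget the operation $'$ and build a canonical extension of the bounded lattice reduct of $L$. The standard construction uses the polarity between the set $\mathcal{F}$ of filters and the set $\mathcal{I}$ of ideals of $L$, where $F$ is related to $I$ exactly when $F\cap I\neq\emptyset$; the lattice $L^\sigma$ of Galois-closed sets of this polarity is complete, and $a\mapsto\{I\in\mathcal{I}:a\in I\}$ embeds $L$ into it. A routine check (or an appeal to the Gehrke--Harding construction in \cite{MaiJohn}) shows that this embedding is dense, in that every element is a join of meets of elements of $L$, and compact, in that $\bigwedge S\leq\bigvee T$ already holds for finite subsets. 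Calling the meets of elements of $L$ the \emph{closed} elements and their orthocomplements the \emph{open} elements, density says that the closed elements are join-dense in $L^\sigma$.

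Next I would transport the orthocomplementation. Since $':L\to L$ is an order-reversing bijection of period two, it carries filters to ideals and reverses the polarity, so it induces an order-reversing involution on the Galois-closed sets; this yields an operation $'$ on $L^\sigma$ that restricts to the original $'$ on $L$ and satisfies $x''=x$ and $x\leq y\Rightarrow y'\leq x'$ by construction. What remains, and what I expect to be the crux, is verifying the orthocomplement laws $x\wedge x'=0$ and $x\vee x'=1$ in $L^\sigma$, since meet does not distribute over join in a non-distributive lattice and so these cannot be read off termwise from the representation of $x$ as a join of meets. I would first establish $p\wedge p'=0$ for a closed element $p=\bigwedge F$: a nonzero closed $q=\bigwedge G$ below $p\wedge p'$ would satisfy $\bigwedge G\leq\bigvee\{a':a\in F\}$, and compactness produces finite $G'\subseteq G$, $F'\subseteq F$ with $\bigwedge G'\leq\bigvee\{a':a\in F'\}$; setting $g=\bigwedge G'$ and $f=\bigvee_{a\in F'}a'$, both living in $L$, one finds $g\leq f$ and $g\leq f'$, whence $g=0$ in $L$ and $q=0$. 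Applying $'$ gives $p\vee p'=1$ for closed $p$ as well. The passage to arbitrary $x$ is then clean: if a closed $r$ lies below $x\wedge x'$, then $r\leq x$ and $r\leq x'$ force $x\leq r'$, so $r\leq x\leq r'$ and hence $r\leq r\wedge r'=0$ by the closed case; since closed elements are join-dense, $x\wedge x'=0$, and $x\vee x'=1$ follows by applying $'$.

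For uniqueness I would use that the order relation among closed elements is already pinned down by $L$: for filters $F,G$, compactness gives $\bigwedge e[F]\leq\bigwedge e[G]$ if and only if every $g\in G$ dominates some $f\in F$, a condition internal to $L$ and independent of the completion. Given two canonical completions $e_1:L\to C_1$ and $e_2:L\to C_2$, this lets me define an order-isomorphism on closed elements by $\bigwedge e_1[F]\mapsto\bigwedge e_2[F]$ and extend it by $\varphi(x)=\bigvee\{\varphi(p):p\text{ closed},\,p\leq x\}$ using join-density. Standard bookkeeping shows $\varphi$ is a complete lattice isomorphism with $\varphi e_1=e_2$, and it is the only such map, since its values on closed elements are forced by preservation of $e$ and of meets and its values elsewhere are forced by preservation of joins. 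Finally $\varphi$ preserves $'$: on a closed $p=\bigwedge e_1[F]$ both $\varphi(p')$ and $\varphi(p)'$ equal $\bigvee e_2[\{a':a\in F\}]$, and this extends to all $x$ because $\varphi$ preserves joins and meets while $'$ exchanges them. Hence the completion is unique up to a unique commuting isomorphism.
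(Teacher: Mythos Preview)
The paper does not supply a proof of this theorem; it is stated with an implicit appeal to \cite{MaiJohn,hardingcanonical}, where the construction and uniqueness are carried out in detail. Your outline is precisely the approach of those references: build the canonical extension of the bounded-lattice reduct via the filter--ideal polarity, lift the orthocomplementation as the canonical extension of an order-reversing involution, and invoke the standard uniqueness argument for canonical extensions of bounded lattices. So there is nothing to compare against in the paper itself, and your route is the expected one.

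One step should be tightened. In verifying $p\wedge p'=0$ for closed $p=\bigwedge F$, you apply compactness only to $q\leq p'$, obtaining finite $G'\subseteq G$ and $F'\subseteq F$ with $g:=\bigwedge G'\leq f:=\bigvee_{a\in F'}a'$, and then assert $g\leq f'=\bigwedge F'$. That second inequality does not follow as written: from $q\leq p$ you get $\bigwedge G\leq f'$, but $g=\bigwedge G'\geq\bigwedge G$, so $g\leq f'$ is not automatic. The repair is immediate---since $f'\in L$, apply compactness to $\bigwedge G\leq f'$ to obtain a finite $G''\subseteq G$ with $\bigwedge G''\leq f'$, and replace $G'$ by $G'\cup G''$. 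With this adjustment the argument is correct and in line with the cited literature.
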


Canonical completions are better at preserving algebraic properties than MacNeille completions. In \cite{MaiJohnYde} it is shown that any variety that is closed under MacNeille completions is closed under canonical completions. However, canonical completions do not provide the answers we seek here.

\begin{thm}\label{vbm}
The canonical completion of an \textsc{mol} need not be an \textsc{mol} and the canonical completion of an \textsc{oml} need not be an \textsc{oml}.
\end{thm}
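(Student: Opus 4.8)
The statement bundles two claims, and I would attack them separately: the \textsc{mol} claim falls out of Corollary~\ref{mkk}, whereas the \textsc{oml} claim appears to require an explicit example, since it is open whether every \textsc{oml} embeds into a complete \textsc{oml} and so no abstract embedding obstruction is available.

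For the \textsc{mol} claim, recall that the canonical completion is an embedding $L\hookrightarrow L^\sigma$ of ortholattices, hence in particular a lattice embedding. Modularity is an identity, so it is inherited by sublattices; thus if $L$ is an \textsc{mol} and $L^\sigma$ were an \textsc{mol}, the image of $L$ would display $L$ as a sub-\textsc{mol} of the complete \textsc{mol} $L^\sigma$. Taking for $L$ the \textsc{mol} produced in Corollary~\ref{mkk}, which admits no embedding into a complete \textsc{mol}, yields a contradiction. Hence the canonical completion of that particular $L$ is a complete \textsc{ol} that fails modularity, proving the first half.

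For the \textsc{oml} claim the same trick is unavailable: Remark~\ref{bn} shows that the natural candidate \textsc{oml}s \emph{do} embed into complete \textsc{oml}s, so no embedding obstruction can certify that a given canonical completion leaves the class \textsc{oml}. I would therefore build an explicit witness. My first candidate is the \textsc{oml} $L$ of finite and cofinite dimensional subspaces of an incomplete inner product space $V$, as in Corollary~\ref{zx}; there the failure of orthomodularity in the MacNeille completion (via Amemiya--Araki \cite{Amemiya-Araki}) pinpoints where incompleteness of $V$ breaks the law, and I would track the analogous configuration inside $L^\sigma$. Concretely, I would locate a pair $a\leq b$ in $L^\sigma$, each built as a join of meets of subspaces from $L$, for which $a\vee(a'\wedge b)<b$, so that orthomodularity fails on three explicit elements. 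A second, more combinatorial candidate is the Kalmbach \textsc{oml} $K(L)$ of a suitably chosen bounded lattice $L$: since the condition on $L$ governing whether the MacNeille completion of $K(L)$ is an \textsc{oml} is understood, I expect a parallel analysis to isolate a lattice $L$ for which the canonical completion of $K(L)$ is not orthomodular.

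The main obstacle is computational rather than conceptual. The MacNeille completion carries a transparent concrete model, the biorthogonal subspaces, which is precisely what makes Amemiya--Araki applicable; the canonical completion, by contrast, is specified only abstractly through the density requirement that every element be a join of meets of elements of $L$ together with the finiteness (compactness) condition. The hard part will be to pin down, from this abstract description alone, which joins of meets the witnessing elements $a$, $a'$, and $b$ actually are, and to verify that the compactness condition forces the strict inequality $a\vee(a'\wedge b)<b$ rather than collapsing it. Once the relevant meets and joins are computed inside $L^\sigma$, checking orthomodularity reduces to a finite verification on the three elements involved.
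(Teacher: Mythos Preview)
Your \textsc{mol} argument is exactly the paper's: Corollary~\ref{mkk} combined with the fact that the canonical completion is an \textsc{ol} embedding into a complete \textsc{ol} forces the canonical completion of that particular \textsc{mol} to be non-modular.

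For the \textsc{oml} claim, the paper does not give a self-contained argument either; it simply cites \cite{hardingcanonical}, which establishes necessary and sufficient conditions on a bounded lattice $L$ for the canonical completion of the Kalmbach \textsc{oml} $K(L)$ to be orthomodular, and exhibits lattices $L$ for which these conditions fail. That is precisely your second candidate, so you have correctly guessed the route actually taken. Your first candidate --- the finite/cofinite-dimensional subspaces of an incomplete inner product space --- is not used, and I would be wary of it: the Amemiya--Araki machinery is tied specifically to the biorthogonal closure, which \emph{is} the MacNeille completion, whereas the canonical completion destroys all non-essentially-finite joins and has no evident concrete model in terms of subspaces of $V$. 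There is thus no obvious mechanism by which the metric incompleteness of $V$ would manifest as a failure of orthomodularity in $L^\sigma$. The $K(L)$ route is both the one in the literature and the more tractable, because the blocks of $K(L)$ are free Boolean extensions of chains and the canonical completion can be analyzed purely combinatorially.
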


\begin{proof}
The first statement follows from Corollary~\ref{mkk}. The second is established in \cite{hardingcanonical} where necessary and sufficient conditions are given for the canonical completion of $K(L)$ to be an \textsc{oml}.
\end{proof}

We have occasion now to discuss matters related to states. We recall that a {\em state} on an \textsc{omp} $P$ is a map $\sigma:P\to [0,1]$ that preserves bounds and is {\em finitely additive} meaning that $x\perp y$ implies $\sigma(x\vee y) = \sigma(x)+\sigma(y)$. {\em Pure states} are ones that cannot be obtained as a non-trivial convex combination of others.

\begin{rem}
The proof of Theorem~\ref{vbm} gives more. Each $K(L)$ has a full set of 2-valued states, i.e. states taking only values 0,1, hence is what is known as a {\em concrete} \textsc{oml}. The concrete \textsc{oml}s form a variety, and the above results show that this variety is not closed under MacNeille completions or canonical completions.
\end{rem}

Having given a number of negative results, we mention a direction that produces strong positive results in a physically motivated setting. The first result in this area was by Bugajska and Bugajski \cite{Bugajski}. Here we follow a sequel to this result by Guz \cite{Guz} which we reformulate below.

\begin{thm}\label{nb}
Let $P$ be an \textsc{omp} and $\mc{S}$ be its set of pure states. Suppose $P$ satisfies
\begin{enumerate}
\item For each non-zero $x\in P$ there is $\sigma\in\mc{S}$ with $\sigma(x)=1$
\item If $x\nleq y$ then there is $\sigma\in\mc{S}$ with $\sigma(x)=1$ and $\sigma(y)\neq 1$
\item For each $\sigma\in\mc{S}$ there is $x\in P$ with $\sigma(x)=1$ and $\sigma'(x)\neq 1$ for all $\sigma'\in\mc{S}$ with $\sigma\neq\sigma'$.
\end{enumerate}
Then $P$ is atomistic and its MacNeille completion is an \textsc{oml}.
\end{thm}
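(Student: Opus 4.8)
The plan is to represent $P$ concretely inside the power set of its pure state space and then to build the MacNeille completion from this representation. For $x\in P$ I would set $M(x)=\{\sigma\in\mc{S}:\sigma(x)=1\}$. Since states are order preserving, $x\le y$ gives $M(x)\subseteq M(y)$, and hypothesis (2) supplies the converse, so $M$ is an order embedding; hypothesis (1) gives $M(x)=\emptyset$ iff $x=0$. Thus $M$ faithfully records the order of $P$ as an inclusion of subsets of $\mc{S}$.

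Next I would extract the atoms. For $\sigma\in\mc{S}$, hypothesis (3) produces an element $a_\sigma$ with $M(a_\sigma)=\{\sigma\}$. If $0<z\le a_\sigma$ then $\emptyset\neq M(z)\subseteq\{\sigma\}$ by (1), forcing $M(z)=M(a_\sigma)$ and hence $z=a_\sigma$, so each $a_\sigma$ is an atom. For an arbitrary $x$, the atoms below $x$ are exactly the $a_\sigma$ with $\sigma\in M(x)$, and $x$ is their join: it is an upper bound, and any upper bound $y$ has $M(x)\subseteq M(y)$, hence $x\le y$. This proves $P$ is atomistic.

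For the completion I would pass to the orthogonality structure carried by the set $A$ of atoms, with $a\perp b$ meaning $a\le b'$, and form the complete \textsc{ol} $L(A)$ of biorthogonally closed subsets $B=B^{\perp\perp}$ with orthocomplement $B\mapsto B^\perp$. Sending $x$ to its set of atoms $\{a\in A:a\le x\}$ embeds $P$ into $L(A)$. Atomisticity makes this image join dense, since every closed set is the closure of the atoms it contains; and because the orthocomplement of $\{a:a\le x\}$ is $\{a:a\le x'\}$, the image is closed under orthocomplementation and is therefore meet dense as well. Hence $L(A)$ is (isomorphic to) the MacNeille completion of $P$, which is in any case a complete \textsc{ol} by the extension of the orthocomplementation as in MacLaren's theorem.

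The real content is orthomodularity of $L(A)$, and this is where I expect the main difficulty. The orthomodular law for closed $B\le C$ reduces to a separation statement: if $B\subsetneq C$ then $C$ contains an atom orthogonal to all of $B$, i.e. $C\wedge B^\perp\neq 0$. Granting this, if $D=B\vee(C\wedge B^\perp)$ were strictly below $C$ one applies separation to $D\subsetneq C$ to find an atom $a\le C$ with $a\le D^\perp\subseteq B^\perp$, whence $a\le C\wedge B^\perp\le D$ and also $a\le D^\perp$, forcing $a\le D\wedge D^\perp=0$, a contradiction. The obstacle is proving the separation statement itself. When $B$ is the image of some $b\in P$ and the relevant joins happen to exist, one can motivate it inside $P$: choosing an atom $a_\sigma\le C$ not below $b$ and using orthomodularity of $P$, the element $b'\wedge(a_\sigma\vee b)$ is a nonzero element orthogonal to $b$ and below $C$, and any atom beneath it lies in $C\wedge B^\perp$. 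The genuine work, and the reason all three hypotheses are needed, is that in an \textsc{omp} the join $a_\sigma\vee b$ need not exist and a general closed $B$ need not come from $P$; one must instead run this projection argument in the completion while using conditions (1)--(3) to guarantee that every abstract ``direction'' separating $C$ from $B$ is witnessed by a genuine pure state, and hence by an atom of $L(A)$. Securing that witnessing is the crux that keeps the completion orthomodular, in contrast to the general case where the MacNeille completion of an \textsc{oml} leaves the class of \textsc{oml}s.
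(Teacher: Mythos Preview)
The paper does not actually prove this theorem; it is stated as a reformulation of a result of Guz \cite{Guz} (building on Bugajska and Bugajski \cite{Bugajski}) and no argument is supplied. So there is no ``paper's own proof'' to compare against, and your write-up should be assessed on its own terms.

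Your treatment of atomisticity is correct and clean. The map $M$ is indeed an order embedding by (1) and (2); condition (3) gives $M(a_\sigma)=\{\sigma\}$, and your argument that these $a_\sigma$ are atoms and that every $x$ is the join of the $a_\sigma$ with $\sigma\in M(x)$ goes through (the step $\sigma\in M(x)\Rightarrow a_\sigma\le x$ uses (2) together with $M(a_\sigma)=\{\sigma\}$, which you used implicitly). Your identification of $L(A)$ with the MacNeille completion is also fine once you check, as you essentially did, that $\{a:a\le x\}^{\perp}=\{a:a\le x'\}$, so the image of $P$ consists of closed sets and is closed under $\perp$.

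Where your write-up falls short of a proof is exactly where you say it does: the separation statement $B\subsetneq C\Rightarrow C\wedge B^\perp\neq 0$ for arbitrary biorthogonally closed $B,C$. You give the right reduction (this is equivalent to orthomodularity of $L(A)$), and your heuristic with $b'\wedge(a_\sigma\vee b)$ is the right shape when $B$ comes from $P$ and the join exists, but you stop at precisely the point where the hypotheses have to do real work. As written this is a plan, not a proof: you have not shown how (1)--(3) produce the needed atom in $C\cap B^\perp$ when $B$ is a general closed set. Since the paper offers no argument either, if you want a complete proof you will need to go to Guz's paper for the missing step, or supply an argument that for closed $B\subsetneq C$ one can find a pure state $\sigma$ with $a_\sigma\in C$ and $\sigma$ vanishing on all atoms of $B$; the bijection atoms $\leftrightarrow$ pure states you set up is the right vehicle, but the passage from ``$C\setminus B$ contains an atom'' to ``$C$ contains an atom orthogonal to $B$'' is the substantive gap.
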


\begin{rem}
Theorem~\ref{nb} is intended to provide simple physical assumptions on the events of a quantum system that ensure it can be embedded into a complete \textsc{oml}. The assumptions here have physical interpretation. The first says that each event is certain in some pure state. The second says that $x\leq y$ iff whenever $x$ is certain in some state, then so also is $y$. Guz describes the third as saying that for any pure state, there is an event that can be used to test for it. The line of reasoning is certainly of interest, but the axioms chosen are not without issue. The third axiom in particular is quite strong. For instance, the axioms given will not hold in a non-atomic Boolean algebra. It is of interest to see if less restrictive conditions on the set of states provides similar results.
\end{rem}

For the remainder of this section, we discuss some possible directions for approaches to completion problems. These are speculative, and may not turn out to be of use. But they seem worthy of further consideration. We begin with the following result of \cite{BrunsHardingAmal} called the Boolean amalgamation property.

\begin{thm}\label{gh}
Let $L_1, L_2$ be \textsc{oml}s that intersect in a common Boolean subalgebra. Then there is an \textsc{oml} $L$ containing $L_1,L_2$ as subalgebras.
\end{thm}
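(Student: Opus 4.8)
The plan is to build the amalgam as the pushout (amalgamated coproduct) $L = L_1 \amalg_B L_2$ taken in the variety of \textsc{oml}s, which exists because varieties are cocomplete, and to reduce the theorem to showing that the two coprojections $\iota_1 : L_1 \to L$ and $\iota_2 : L_2 \to L$ are embeddings. By the universal property of the pushout, homomorphisms out of $L$ correspond exactly to pairs of \textsc{oml} homomorphisms $f_1 : L_1 \to M$, $f_2 : L_2 \to M$ that agree on $B$. Consequently, for a pair $x \neq y$ in $L_1$ one has $\iota_1(x) \neq \iota_1(y)$ if and only if some such pair $(f_1, f_2)$ separates them, i.e. $f_1(x) \neq f_1(y)$: the forward direction uses the pair $(\iota_1, \iota_2)$ itself, and the reverse direction factors $(f_1,f_2)$ through $L$. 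So it suffices to produce, for each pair of distinct elements of $L_1$, a separating pair of maps agreeing on $B$, and by symmetry the same device handles $L_2$.

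To separate points of $L_1$ it is enough to take $f_1$ to be an embedding, since then $f_1(x) \neq f_1(y)$ is automatic. Thus the whole theorem reduces to the following one-sided extension lemma: there is an \textsc{oml} $M$ containing $L_1$ as a subalgebra together with a homomorphism $g : L_2 \to M$ whose restriction to $B$ is the inclusion $B \hookrightarrow L_1 \subseteq M$. To build $M$ I would exploit that $B$, being Boolean, is governed by Stone duality: writing $X$ for the Stone space of $B$ and identifying $B$ with the clopen subsets of $X$, I would assemble $M$ as a sub-\textsc{oml} of a product of extensions of $L_1$ indexed over the points of $X$ (a Boolean-power / sheaf-type construction), arranged so that $B$ reappears inside $M$ as the locally constant $\{0,1\}$-valued elements matching the clopens of $X$. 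The homomorphism $g$ would then be stitched together from ``localizations'' of $L_2$ at the points $p \in X$, each realizing on $B$ the two-valued homomorphism $p$.

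The motivating special case is instructive and essentially free: if $B$ were central in both $L_1$ and $L_2$, then each $L_i$ would split as a direct product over the atoms of $B$ (for finite $B$), or more generally as a sheaf over $X$, and one could amalgamate fibrewise. Each fibre is an amalgamation over the trivial subalgebra $\mathbf{2} = \{0,1\}$, which is simply the horizontal sum of two \textsc{oml}s and is always an \textsc{oml}; the product of the fibrewise amalgams then furnishes $M$. The hard part will be precisely that $B$ need \emph{not} be central in $L_i$: its elements are mutually compatible, but they need not commute with the rest of $L_i$, and it is exactly such failure of compatibility that makes amalgamation over a general common subalgebra fail. The crux is therefore to show that the internal Boolean (hence fully compatible) character of $B$ is enough to make the local pieces cohere into a single global homomorphism despite non-centrality. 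I expect this coherence step --- controlling the commutators of the elements of $L_2$ against $B$ --- to be where the orthomodular law and the Foulis--Holland theorem, that a pairwise commuting triple generates a distributive sublattice, are used decisively.
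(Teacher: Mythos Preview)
The paper does not give its own proof of this theorem; it simply quotes the result from Bruns and Harding, \emph{Amalgamation of ortholattices} (Order 14, 1998), so there is no in-paper argument to compare against directly.

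On the substance of your proposal: the reduction via the pushout in \textsc{oml} to a one-sided extension lemma is correct and standard. The genuine gap is exactly where you yourself place it. Your construction of $M$ is carried out only in the case where $B$ is central in both $L_i$; the Stone-space/Boolean-power machinery you invoke encodes decompositions along a \emph{central} Boolean subalgebra, since the stalk at a point $p\in X$ is the quotient by the congruence generated by the ultrafilter $p$, and for those quotients to organize $L_2$ one needs the elements of $B$ to commute with everything in $L_2$. When $B$ is merely a Boolean subalgebra---its elements pairwise commute with one another but not with arbitrary elements of $L_i$---there is no evident ``localization of $L_2$ at $p$'' and hence no evident way to assemble a homomorphism $g:L_2\to M$ restricting to the inclusion on $B$. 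Your closing paragraph acknowledges this but does not supply the missing mechanism; Foulis--Holland only controls sublattices generated by sets with enough internal commutativity and does not by itself manufacture the required extension. As written, then, the proposal is a plan with the decisive step still open.

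For orientation, the cited Bruns--Harding argument does not route through Stone duality or sheaves; it builds the amalgam by an explicit pasting construction and uses the Boolean hypothesis on $B$ combinatorially to verify orthomodularity of the resulting ortholattice, rather than to decompose either factor.
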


\begin{rem}
As with any lattice, an \textsc{oml} $L$ is complete iff each chain in $L$ has a join. Since a chain in an \textsc{oml} is contained in a {\em block} (a maximal Boolean subalgebra), an \textsc{oml} is complete iff all of its blocks are complete. The Boolean amalgamation property lets us complete a given block $B$ of an \textsc{oml} $L$. Taking a completion $C$ of $B$, we have that $L$ and~$C$ intersect in common Boolean subalgebra $B$, so there is an \textsc{oml} $M$ containing $L$ and $C$ as subalgebras. In effect, $B$ has been completed within $M$.

One could hope to iterate this process to obtain a $\sigma$-completion of $L$. Unfortunately, such an iterative approach requires that we preserve joins completed at an early stage, and the Boolean amalgamation described in Theorem~\ref{gh} does not do this. Perhaps there is a modification of the proof of Theorem~\ref{gh} that does allow this. On the other hand, Theorem~\ref{gh} has implications in terms of constructing an example of an \textsc{oml} that cannot be completed to an \textsc{oml}. Roughly, one particular join can always be inserted.
\end{rem}

Our next topic, due to Rump \cite{Rumpvn}, is a translation of orthomodular lattices into structures belonging to classical algebra. In \cite{Unigroup} Foulis, Greechie and Bennett associated to an \textsc{oml}~$L$ its {\em unigroup}. This unigroup $G(L)$ is a partially ordered abelian group with strong order unit $u$ and map $\mu_L$ from $L$ into the interval $[0,u]$ of $G$. The map $\mu_L:L\to G(L)$ is universal among abelian group-valued measures, meaning that for any abelian group-valued measure $\mu:L\to G$ there is a group homomorphism $f_\mu:G(L)\to G$ with $f\circ\mu_L = \mu$. However, $\mu_L$ need not be an embedding. Rump repairs this defect by extending to the setting of non-commutative groups.

\begin{defn}
A {\em right $\ell$-group} is a group $G$ equipped with a lattice structure $\leq$ such that $a\leq b\Rightarrow ac\leq bc$. An element $u$ is a {\em strong order unit} if $b\leq c\Leftrightarrow ub\leq uc$ for all $b,c\in G$ and each $a\in G$ satisfies $a\leq u^n$ for some natural number $n$. A strong order unit $u$ is called {\em singular} if $u^{-1}\leq xy \Rightarrow yx=x\wedge y$.
\end{defn}

Rump called a right $\ell$-group with singular strong order unit an {\em orthomodular group}. Using a construction that extends Foulis' construction of a Baer *-semigroup from an \textsc{oml} \cite{FoulisBaer*}, he provided a method to associate to any \textsc{oml} $L$ an orthomodular group $G_R(L)$, called the structure group of $L$. We denote this $G_R(L)$ to distinguish it from the unigroup of $L$. However, it has the same universal property for group-valued measures as the unigroup has for abelian group-valued measures. Moreover, he shows the following \cite[Th,.~4.10]{Rumpvn}.

\begin{thm}
If $G$ is an orthomodular group with singular strong order unit $s$, then the interval $[s^{-1},1]$ is an \textsc{oml} with structure group $G$. Conversely, each \textsc{oml} arises this way.
\end{thm}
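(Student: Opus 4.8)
The plan is to establish the two directions separately and then observe that the two passages are mutually inverse. For the forward direction I would first check that $[s^{-1},1]$ is a bounded sublattice of $G$: if $x,y$ lie in the interval then $s^{-1}\leq x\wedge y$ and $x\vee y\leq 1$, since $s^{-1}$ is a lower and $1$ an upper bound of $\{x,y\}$, so the interval is closed under the lattice operations inherited from $G$ and has least element $s^{-1}$ and greatest element $1$. The delicate point is the orthocomplementation. I would define $x'$ by a group expression that sends $s^{-1}\mapsto 1$ and $1\mapsto s^{-1}$ and reverses the order on the interval; the natural candidate is an involution built from inversion together with multiplication by $s^{\pm 1}$ (equivalently, after translating the interval to $[1,s]$ by right multiplication by $s$). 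Because $G$ is only a \emph{right} $\ell$-group, inversion need not be order-reversing on all of $G$, so verifying that $'$ is well defined, order-inverting, and genuinely period two on the interval cannot invoke a generic two-sided $\ell$-group identity and must use the specific geometry of the strong order unit. Making the candidate operation truly involutive is already part of the subtlety here.

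Having $'$ in hand, I would verify $x\wedge x'=s^{-1}$ and $x\vee x'=1$, yielding an \textsc{ol}, and then turn to the orthomodular law, which is the heart of the matter and where singularity of $s$ enters decisively. I read the condition $s^{-1}\leq xy\Rightarrow yx=x\wedge y$ as saying that whenever two interval elements stand in the relevant orthogonality relation (the multiplicative form of the hypothesis $s^{-1}\leq xy$), one of the two group products computes the lattice meet, so that \emph{orthogonal joins in the interval are realized by group products}. This is the non-commutative analogue of the abelian fact that in the unigroup the orthogonal join of elements of $L\subseteq[0,u]$ is their group sum. I would then prove $x\leq y\Rightarrow y=x\vee(x'\wedge y)$ by rewriting both sides as group words and invoking the singular relation to collapse them. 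This computation, together with the well-definedness of $'$, is the step I expect to be the main obstacle.

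For the clause ``with structure group $G$'' I would use the universal property recalled before the statement: the universal (non-commutative) group-valued measure out of $[s^{-1},1]$ factors through $G_R([s^{-1},1])$. Since $s$ is a strong order unit the interval generates $G$, and $G$ itself is an orthomodular group receiving the tautological measure from its own interval, so one obtains a canonical map $G_R([s^{-1},1])\to G$ which I would show to be an isomorphism by comparing it on generators and defining relations. For the converse I would take an arbitrary \textsc{oml} $L$, form $G_R(L)$ via Rump's extension of Foulis' Baer *-semigroup construction, and check that it is an orthomodular group, that is, a right $\ell$-group carrying a singular strong order unit $s$. The essential improvement over the abelian unigroup, where $\mu_L$ need not be injective, is that the universal measure $\mu_L\colon L\to G_R(L)$ is now an \emph{embedding}; identifying its image with $[s^{-1},1]$ then gives $L\cong[s^{-1},1]$. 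Finally I would record that the two passages $G\mapsto[s^{-1},1]$ and $L\mapsto G_R(L)$ are mutually inverse, which simultaneously delivers the ``structure group $G$'' assertion and the statement that every \textsc{oml} arises this way. Proving that $\mu_L$ is injective, and the orthomodular-law computation of the first direction, are the two places I expect to require the most work.
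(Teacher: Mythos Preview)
The paper does not prove this theorem. It is stated as a result of Rump, with an explicit citation to \cite[Th.~4.10]{Rumpvn}, and no argument is supplied in the text; this is a survey paper and the theorem is simply quoted from the literature. Consequently there is no ``paper's own proof'' to compare your proposal against.

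As a separate remark on your outline: what you have written is a reasonable high-level plan that correctly identifies the pressure points (well-definedness and order-reversal of the orthocomplementation in a merely right-ordered group, the role of singularity in producing the orthomodular law, and injectivity of $\mu_L$), but it remains a plan rather than a proof. In particular, you never commit to a specific formula for $x'$ and never verify any of the identities you list; the phrases ``the natural candidate is an involution built from inversion together with multiplication by $s^{\pm 1}$'' and ``I would then prove $x\leq y\Rightarrow y=x\vee(x'\wedge y)$ by rewriting both sides as group words'' defer exactly the computations that carry the content. If you intend to supply a self-contained argument rather than cite Rump, those steps need to be carried out explicitly.
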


As is usual with ordered groups, we say an orthomodular group is complete iff every bounded subset has a least upper bound. This property is often known as Dedekind complete. Rump showed that an \textsc{oml} is complete iff its structure group is complete. This leads to the question of whether techniques from the theory of $\ell$-groups can be used to find a completion for \textsc{oml}s. Of course, the results of Foulis on Baer *-semigroups allowed a similar algebraic path for many decades without result, but there has been much progress in the study of $\ell$-groups.

\section{Decidability and axiomatizability} \label{jj}

Given a class of algebras $\mc{K}$, several logical questions arise. One can ask if there is an algorithm to decide the equational theory $\sf{Eq}(\mc{K})$ or first order theory $\sf{Th}(\mc{K})$, and if there is a finite set of equations, respectively first order formulas, that axiomatize these theories. When $\mc{K}$ consists of a single algebra $A$ we write $\sf{Eq}(A)$ and $\sf{Th}(A)$. We begin with some standard terminology.

\begin{defn}
A variety $V$ has {\em solvable free word problem} if there is an algorithm to determine if an equation holds in $V$.
\end{defn}

Since $V$ has the same equational theory as its free algebra $F_V(\omega)$ over countably many generators, solving the free word problem for $V$ amounts to giving an algorithm to decide if an equation holds in its countably generated free algebra. Before discussing the situation for specific varieties, we describe a general technique.

\begin{defn}
A {\em partial subalgebra} of an algebra $A$ is a subset $S\subseteq A$ equipped with partial operations being the restriction of the operations of $A$ to those tuples in $S$ where the result of the operation in $A$ belongs to $S$. A variety $V$ has the {\em finite embedding property} if each finite partial subalgebra of an algebra in $V$ is a partial subalgebra of a finite algebra in $V$.
\end{defn}

There are a number of connections between partial algebras and word problems, see for example \cite{Evans,Sapir}. The following is obtained via a back and forth argument of checking for a proof of an equation and looking for a counterexample among finite algebras.

\begin{thm}\label{fep}
If a variety $V$ is finitely axiomatized and has the finite embedding property, then it has solvable free word problem.
\end{thm}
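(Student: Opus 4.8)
The plan is to exhibit two complementary semi-decision procedures and run them in parallel, arranging that on any input equation $s=t$ exactly one halts and yields the correct answer. The first procedure searches for a proof. Since $V$ is finitely axiomatized, say by a finite set $\Sigma$ of equations, Birkhoff's completeness theorem for equational logic guarantees that $s=t$ holds in $V$ if and only if it is derivable from $\Sigma$. The derivations from a finite $\Sigma$ form a recursively enumerable set, so I would systematically enumerate all equational consequences of $\Sigma$ and halt if $s=t$ ever appears.

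The second procedure searches for a finite counterexample. Here I would enumerate all finite algebras in the signature, up to isomorphism, and for each one test whether it satisfies the finitely many axioms in $\Sigma$ (a decidable check, since the algebra is finite) and, if so, whether $s=t$ fails in it under some assignment of variables; I halt if such a failure is found. This is a semi-decision procedure for the failure of $s=t$ in the class of \emph{finite} members of $V$.

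The crux is to show that these two procedures are jointly exhaustive, i.e.\ that $s=t$ holds in $V$ if and only if it fails in no finite member of $V$. One direction is immediate, as finite members of $V$ lie in $V$. For the converse, suppose $s=t$ fails in $V$; then there is some $A\in V$ and an assignment $\bar a$ of the variables to elements of $A$ with $s^A(\bar a)\neq t^A(\bar a)$. Let $S\subseteq A$ be the finite set of values under $\bar a$ of all subterms of $s$ and of $t$. By construction $S$ is closed under the relevant partial operations: whenever a subterm $f(u_1,\dots,u_k)$ occurs, the values $u_i^A(\bar a)$ and the value $f^A(u_1^A(\bar a),\dots,u_k^A(\bar a))$ all lie in $S$. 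Thus $S$ is a finite partial subalgebra of $A$, and this is precisely where the finite embedding property enters: $S$ embeds as a partial subalgebra into some finite $B\in V$ via an injection $\iota$ that preserves the partial operations wherever they are defined.

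It remains to verify that this embedding reflects the failure. I would prove, by induction on the structure of subterms, that $u^B(\iota\bar a)=\iota\bigl(u^A(\bar a)\bigr)$ for every subterm $u$ of $s$ or of $t$; the inductive step is exactly where membership of the intermediate values in $S$ guarantees that $\iota$ commutes with the operation in question. Applying this with $u=s$ and $u=t$, and using injectivity of $\iota$, gives $s^B(\iota\bar a)\neq t^B(\iota\bar a)$, so $s=t$ fails in the finite algebra $B\in V$, as required. The main obstacle is this faithful-reflection step: one must take $S$ large enough to contain all subterm values, not merely the values assigned to the variables, so that every operation applied along the computation of $s$ and $t$ is genuinely defined in the partial subalgebra and the embedding furnished by the finite embedding property transports the entire computation intact.
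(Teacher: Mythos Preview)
Your proposal is correct and is exactly the approach the paper indicates: the paper merely says the result ``is obtained via a back and forth argument of checking for a proof of an equation and looking for a counterexample among finite algebras,'' and you have supplied precisely that argument with all details filled in, including the subterm-closure construction needed to invoke the finite embedding property.
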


We begin with two varieties at the opposite ends of the spectrum of quantum structures, the variety \ts{ba} of Boolean algebras and \ts{ol}. As is often the case where there is a great deal of structure, or relatively little structure, we have solvable free word problems.

\begin{thm}
The free word problems in \ts{ba} and in \ts{ol} are solvable.
\end{thm}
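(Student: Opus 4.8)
The plan is to treat the two varieties by different but standard routes, handling \textsc{ba} by a direct semantic reduction and \textsc{ol} through the finite embedding property. For \textsc{ba} I would use that the variety of Boolean algebras is generated by the two-element algebra $\mathbf{2}$, so that an equation $s=t$ holds in \textsc{ba} if and only if it holds in $\mathbf{2}$. Since an equation in $n$ variables holds in $\mathbf{2}$ exactly when its two sides evaluate equally under all $2^n$ assignments of $0,1$ to the variables, this is a finite truth-table check, and hence the free word problem for \textsc{ba} is decidable.

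For \textsc{ol} I would invoke Theorem~\ref{fep}: since \textsc{ol} is finitely axiomatized (the lattice axioms together with the finitely many ortholattice identities), it suffices to show that \textsc{ol} has the finite embedding property. So the heart of the argument is to embed an arbitrary finite partial subalgebra $S$ of an ortholattice $L$ into a finite ortholattice. First I would enlarge $S$, if necessary, by adjoining $0,1$ and closing under $'$; this at most doubles the size and keeps $S$ finite, and it does not disturb the defined operations. The key preliminary observation is that the partial meets and joins of $S$ (those computed in $L$ that happen to land in $S$) coincide with the meets and joins computed in the poset reduct of $S$: if $x\wedge y=z\in S$ in $L$, then $z$ is already the greatest lower bound of $x,y$ inside the poset $S$, and dually for joins.

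The plan is then to pass to the MacNeille completion $\overline{S}$ of the poset reduct of $S$. Since $S$ is finite, $\overline{S}$ is a finite lattice, and the cut embedding $S\to\overline{S}$ preserves all existing meets and joins, hence preserves every defined lattice operation of the partial algebra $S$. Exactly as in the proof of MacLaren's theorem stated above, the order-reversing involution $'$ on $S$ lifts to an order-reversing period-two map on $\overline{S}$. It then remains to verify that this lifted map is an orthocomplementation, i.e. that $x\wedge x'=0$ and $x\vee x'=1$ in $\overline{S}$. Here I would use the preceding observation: since $x\wedge x'=0$ in $L$ and $0\in S$, the element $0$ is the greatest lower bound of $\{x,x'\}$ in the poset $S$ and is therefore preserved as their meet in $\overline{S}$; dually for joins. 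Thus $\overline{S}$ is a finite ortholattice containing $S$ as a partial subalgebra, which establishes the finite embedding property and, via Theorem~\ref{fep}, solvability of the free word problem for \textsc{ol}.

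The main obstacle I anticipate is precisely this verification that the lifted involution remains an orthocomplementation on the completion; the danger is that a partial subalgebra could fail to record enough meets and joins for the laws $x\wedge x'=0$ and $x\vee x'=1$ to survive the completion. The remedy is the bookkeeping above: closing $S$ under $'$ and adjoining $0,1$ guarantees that these orthogonality witnesses are present as genuine poset meets and joins, which the MacNeille completion is bound to preserve. The remaining points---finiteness of $\overline{S}$ and the functorial lifting of an order-reversing involution---are routine, the former because the MacNeille completion of a finite poset is finite and the latter being the order-theoretic content already used in MacLaren's theorem.
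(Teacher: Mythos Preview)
Your argument is correct and tracks the paper's own proof closely. For \textsc{ol} you do exactly what the paper does---establish the finite embedding property via MacLaren's theorem that the MacNeille completion of an orthocomplemented poset is an \textsc{ol}, together with the fact that MacNeille completions preserve existing meets and joins---only you spell out the bookkeeping (closing under $'$, adjoining $0,1$, checking that partial operations in $S$ are poset operations) that the paper leaves implicit. For \textsc{ba} the paper first notes the finite embedding property (every finitely generated Boolean algebra is finite) and then remarks that the truth-table method via $\mathbf{2}$ is the more practical route; you go straight to the latter, which is also fine.
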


In both cases the result follows easily from Theorem~\ref{fep}. The variety \ts{ba} has the finite embedding property since every finitely generated \ts{ba} is finite. The variety \ts{ol} has the finite embedding property since the MacNeille completion of an orthocomplemented poset is an \ts{ol} \cite{MacLaren} and MacNeille completions preserve all existing joins and meets. In both cases we can give a much more tractable algorithm to decide the free word problem than that provided by Theorem~\ref{fep}. Since \ts{ba} is generated by the 2-element \ts{ba} we need only check validity of an equation in $2$, essentially the method of truth tables. For \ts{ol} Bruns \cite{BrunsFreeOL} gave an explicit algorithm based on Whitman's algorithm for free lattices \cite{Freese}.

\begin{problem}
Is the free word problem for the variety \ts{oml} solvable?
\end{problem}

This problem has received considerable attention over the years, but without a great deal of progress. Trying to establish the finite embedding property for \ts{oml}s was a motivation behind some of the work of Bruns and Greechie on commutators, and the idea of Kalmbach's attempt at a solution to the free word problem \cite{KalmbachFree} that seems to have a gap. There are other hopes to solve the free word problem for \ts{oml} that proceed by finding some other free word problem whose solution would yield that of \ts{oml}. This is the direction of the recent work of Fussner and St. John \cite{Fussner} involving ortholattices where a derived operation is residuated.

We turn briefly to a more general discussion of free algebras in \ts{ba}, \ts{ol}, and \ts{oml}. The following collects a number of known results. We recall, that for a cardinal $\kappa$, that MO$_\kappa$ is the \ts{mol} of height 2 with a bottom, a top, and $\kappa$ pairs of incomparable orthocomplemtary elements in the middle.

\begin{thm}
In \ts{ol} the free algebra on 2 generators contains the free one on countably many generators as a subalgebra. In \ts{oml} the free algebra on 2 generators is MO$_2\times 2^4$ and the free algebra on 3 generators contains the free one on countably many generators as a subalgebra.
\end{thm}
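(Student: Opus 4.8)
The three assertions rest on different mechanisms, so I would prove them separately, but the two containment statements share a common reduction. To produce a copy of the free algebra on countably many generators inside a free algebra $F_{\ts{ol}}(k)$ on $k$ generators, it suffices to show that $F_{\ts{ol}}(\omega)$ is itself $k$-generated in a term-recoverable way: that there are elements $g_1,\dots,g_k$ generating $F_{\ts{ol}}(\omega)$ together with \ts{ol}-terms $w_1,w_2,\dots$ such that $w_i(g_1,\dots,g_k)$ equals the $i$-th free generator $x_i$. Granting this, the universal property of $F_{\ts{ol}}(\omega)$ gives a homomorphism $\phi\colon F_{\ts{ol}}(\omega)\to F_{\ts{ol}}(k)$ with $\phi(x_i)=w_i(a_1,\dots,a_k)$, where $a_1,\dots,a_k$ are the free generators of $F_{\ts{ol}}(k)$; and freeness of $F_{\ts{ol}}(k)$ gives a homomorphism $\eta\colon F_{\ts{ol}}(k)\to F_{\ts{ol}}(\omega)$ with $\eta(a_j)=g_j$. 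A direct computation gives $\eta(\phi(x_i))=\eta(w_i(a_1,\dots,a_k))=w_i(g_1,\dots,g_k)=x_i$, so $\eta\circ\phi$ is the identity, $\phi$ is a split monomorphism, and its image $\langle w_i(a_1,\dots,a_k)\rangle$ is a subalgebra isomorphic to $F_{\ts{ol}}(\omega)$. The same reduction applies verbatim to \ts{oml}.

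Thus the task for the first assertion is to exhibit two elements generating $F_{\ts{ol}}(\omega)$ with recoverable free generators, and for the third assertion three elements generating $F_{\ts{oml}}(\omega)$. Here I would follow a Whitman-style construction adapted to the orthocomplemented setting: the presence of $'$ effectively doubles the two generators $a,b$ into $a,b,a',b'$, providing enough independent material to encode countably many free generators as an explicit sequence of nested meets, joins, and complements. For \ts{ol} I can verify that these terms satisfy no nontrivial relations by invoking the solvability of the \ts{ol} free word problem recorded above, reducing each candidate relation to a decidable question in $F_{\ts{ol}}(2)$. For \ts{oml} no such algorithm is available, so the verification must instead be carried out by realizing the prospective free generators concretely, for instance inside subspace \ts{oml}s or products of the lattices MO$_n$ and Boolean algebras, where their mutual independence can be checked by hand. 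The requirement of three generators rather than two is forced by the middle assertion: since $F_{\ts{oml}}(2)$ is finite it cannot contain the infinite $F_{\ts{oml}}(\omega)$.

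For the identification $F_{\ts{oml}}(2)\cong\,$MO$_2\times 2^4$ I would argue structurally from the commutator theory of \ts{oml}s. Given generators $a,b$, form the ``Boolean skeleton'' $(a\wedge b)\vee(a\wedge b')\vee(a'\wedge b)\vee(a'\wedge b')$, whose complement $c$ is the commutator measuring the failure of $a$ and $b$ to be compatible. The commutator $c$ is central in the subalgebra generated by $a,b$, so this subalgebra splits as $[0,c']\times[0,c]$. On the factor $[0,c']$ the elements $a,b$ are compatible, so by the Foulis--Holland theorem the generated structure is distributive, and freeness forces it to be the free Boolean algebra on two generators, namely $2^4$. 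On the factor $[0,c]$ the generators are maximally skew, and a direct analysis of the finitely many possibilities identifies this part as MO$_2$. Finally I would confirm that the natural pair of generators of MO$_2\times 2^4$ is free by checking the universal property directly: any assignment of $a,b$ into an arbitrary \ts{oml} extends along it, a finite verification because both factors are finite.

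The main obstacle is the explicit construction underlying the two containment statements, namely pinning down the generating pair (respectively triple) of the countably generated free algebra together with the term sequence recovering the free generators, and certifying that this sequence is genuinely relation-free. For \ts{ol} this certification is routine given the decidable word problem, but for \ts{oml} it is delicate precisely because the word problem is open, so the independence of the constructed elements must be established through carefully chosen concrete representations rather than by a uniform algorithm. By comparison, the product decomposition of $F_{\ts{oml}}(2)$ is essentially bookkeeping once the commutator is known to be central, and the accompanying universal-property check is finite.
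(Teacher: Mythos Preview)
The paper does not supply a proof: this theorem collects results from the literature, attributing the \ts{ol} statement to Bruns, the identification $F_{\ts{oml}}(2)\cong\,$MO$_2\times 2^4$ to Beran, and the 3-generator \ts{oml} statement to Harding. So there is no argument in the paper against which to compare yours in detail.

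Your outline for $F_{\ts{oml}}(2)$ via centrality of the commutator is essentially Beran's approach and is sound. Your retraction reduction for the containment statements is also correct. The gap is that you do not actually provide the constructions, and you acknowledge this. One point deserves correction, though: your proposed verification in the \ts{ol} case --- invoking decidability of the word problem to rule out relations among the $w_i$ --- does not work as stated. Decidability lets you check any \emph{one} equation, but showing that the $w_i$ freely generate requires ruling out \emph{every} relation, which is infinitely many checks, so an algorithm alone does not suffice. What Bruns actually exploits is the explicit Whitman-type structure theory of $F_{\ts{ol}}$ (the same machinery that yields decidability), giving uniform structural control over comparability of terms and hence over when a family is free. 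Likewise for \ts{oml}, Harding's argument is precisely the construction of concrete terms together with separating homomorphisms into specific \ts{oml}s; you gesture at this, but that construction \emph{is} the proof and cannot be left as a plan.
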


The results about free \ts{ol}s are due to Bruns \cite{BrunsFreeOL}. The result about the free \ts{oml} on 2 generators is due to Beran \cite{Kalmbach}. It has a significant impact on the study of \ts{oml}s since it makes calculations involving 2 variables tractable. The result about the free \ts{oml} on 3 generators containing the countably generated free one is due to Harding \cite{HardingFree}. The situation for \ts{ol} is similar to that of lattices with very similar algorithms providing a solution to the free word problem. Yet, while there is an extensive literature on properties of free lattices \cite{Freese}, relatively little is known about the structure of free \ts{ol}s.

\begin{problem}
Obtain a better understanding of the structure of free \ts{ol}s and free \ts{oml}s. In particular, what are their finite subalgebras? In a free \ts{ol}, if $b$ is a complement of $a$, are $a'\vee b$ and $a'\wedge b$ also complements of $a$? Can a free \ts{oml} contain an uncountable Boolean subalgebra?
\end{problem}

The situation for \ts{mol}s also leaves a great deal open. The free \ts{mol} on 2 generators is also MO$_2\times 2^4$ since this is modular and is free on 2 generators in \ts{oml}. The free \ts{mol} on 3 generators is infinite. We are not aware of whether it contains the free one on countably many generators as a subalgebra. Roddy has provided a finitely presented \ts{mol} with unsolvable word problem \cite{Roddy}. Every finite subdirectly irreducible \ts{mol} is in the variety generated by MO$_\omega$ \cite{BrunsMOL}, hence satisfies the 2-distributive law. Thus \ts{mol} is not generated by its finite members and so cannot have the finite embedding property. This leaves the following open problem.

\begin{problem}
Does \ts{mol} have solvable free word problem?
\end{problem}

%\begin{rem}
%A {\em quasi-equation} is a formula of the form
%\[ (\forall x_1,\ldots,\forall x_n) (s_1\approx t_1 \, \&\, \cdots \, \&\, s_k\approx t_k \rightarrow s\approx t)\]
%where $s_i\approx t_i$ for $i\leq n$ and $s\approx t$ are equations. The {\em uniform word problem} for $V$ that asks whether there is a decision procedure for the quasi-equational theory of $V$. This is equivalent to providing a single decision procedure to solve the word problem in each finitely presented algebra in $V$. Of course, deciding the equational theory of $V$ is weaker than deciding the quasi-equational theory, which is far weaker than determining the first-order theory.
%\end{rem}

Dunn, Moss, and Wang in their ``Third life of quantum logic'' \cite{DunnIntro} pointed to the value of studying free word problems for the algebras most tightly tied to quantum computing. They used $\sf{QL}(\mathbb{C}^n)$ for the equational theory of the \ts{mol} of closed subspaces $\mc{C}(\mathbb{C}^n)$, and they called this the {\em quantum logic} of $\mathbb{C}^n$. We extend this practice and use $\sf{QL}(\mc{R})$ for the equational theory of the projection lattice of an arbitrary type II$_1$ factor $\mc{R}$, and $\sf{QL}(CG(\mathbb{C}))$ for the equational theory of the  orthocomplemented continuous geometry $CG(\mathbb{C})$ constructed by von Neumann \cite{ContGeo} via a metric completion of a limit of subspace lattices. We summarize results obtained in \cite{Dunn,HardingContGeo,Herrmann} below. In particular, the note of Herrmann \cite{HerrmannNote} is an excellent description of the situation.

\begin{thm}\label{vbv}
$\sf{QL}(\mathbb{C}) \supset \sf{QL}(\mathbb{C}^2)$ $\supset \cdots \supset \bigcap\{\sf{QL}(\mathbb{C}^n):n\geq 1\} = \sf{QL}(CG(\mathbb{C}))=\sf{QL}(\mc{R})$ for each type II$_1$ factor $\mc{R}$. Each of these containments is strict. Each of these equational theories is decidable, and the first order theory of each \textsc{mol} $\mc{C}(\mathbb{C}^n)$ for $n\geq 1$ is decidable.
\end{thm}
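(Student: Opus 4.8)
The plan is to split the statement into four parts — the monotone chain of inclusions, the two equalities at its foot, strictness, and decidability — and to rest everything on two tools: ortholattice factorizations relating the $\mc{C}(\mathbb{C}^n)$ to one another and to the limit objects, and von Neumann's coordinatization, which reduces equational questions to the elementary theory of $\mathbb{C}$. For the chain, I would first prove $\sf{QL}(\mathbb{C}^{n+1})\subseteq\sf{QL}(\mathbb{C}^n)$ by exhibiting $\mc{C}(\mathbb{C}^n)$ in $HS(\mc{C}(\mathbb{C}^{n+1}))$: fixing a hyperplane $p$, its commutant $C(p)$ (the elements compatible with $p$) is a sub-ortholattice in which $p$ is central, so $C(p)\cong[0,p]\times[0,p']$, and the factor $[0,p]$ under its relative orthocomplement is precisely $\mc{C}(\mathbb{C}^n)$; projecting onto it realizes $\mc{C}(\mathbb{C}^n)$ as a quotient of a subalgebra, and transitivity makes the chain decreasing. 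The doubling maps $W\mapsto W\oplus W$, whose metric completion defines $CG(\mathbb{C})$, embed each $\mc{C}(\mathbb{C}^{2^k})$ as an ortholattice into $CG(\mathbb{C})$, giving $\sf{QL}(CG(\mathbb{C}))\subseteq\bigcap_n\sf{QL}(\mathbb{C}^n)$; and splitting the unit of a type II$_1$ factor $\mc{R}$ into $n$ orthogonal equivalent projections produces a subfactor $M_n(\mathbb{C})\subseteq\mc{R}$, hence a copy of $\mc{C}(\mathbb{C}^n)$ in $\mc{P}(\mc{R})$ and $\sf{QL}(\mc{R})\subseteq\bigcap_n\sf{QL}(\mathbb{C}^n)$.

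For the reverse inclusion $\bigcap_n\sf{QL}(\mathbb{C}^n)\subseteq\sf{QL}(CG(\mathbb{C}))$ I would work inside $CG(\mathbb{C})$ with the metric induced by the continuous dimension $d$. Relative to it the lattice operations and $\perp$ are uniformly continuous, every ortholattice term induces a continuous function, and the finite-stage elements coming from the $\mc{C}(\mathbb{C}^{2^k})$ are dense by construction; hence a failure $s(\bar a)\neq t(\bar a)$ persists on a neighborhood and therefore occurs already at a tuple lying in a subalgebra $\mc{C}(\mathbb{C}^{2^k})$, where the two terms take the same values as in $CG(\mathbb{C})$. Identifying $\sf{QL}(CG(\mathbb{C}))$ with $\sf{QL}(\mc{R})$ for an arbitrary II$_1$ factor cannot go through this density argument, since the finite-dimensional subfactors of a non-hyperfinite factor need not be dense; instead I would appeal to coordinatization, noting that $CG(\mathbb{C})$ and each $\mc{P}(\mc{R})$ are irreducible continuous geometries coordinatized by $*$-regular rings over $\mathbb{C}$ and that the lattice equational theory is fixed by this coordinate data. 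This yields the common value and its independence of $\mc{R}$, closing $\bigcap_n\sf{QL}(\mathbb{C}^n)=\sf{QL}(CG(\mathbb{C}))=\sf{QL}(\mc{R})$.

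Strictness I would witness with Huhn's $n$-distributive law, which holds in the height-$n$ geometry $\mc{C}(\mathbb{C}^n)$ but fails in $\mc{C}(\mathbb{C}^{n+1})$, where a system of $n+1$ independent atoms already refutes it; being inherited by sublattices it also fails in $CG(\mathbb{C})$, so every stage sits strictly above the limit. For decidability of the first-order theory of a fixed $\mc{C}(\mathbb{C}^n)$ I would give a first-order interpretation of the \ts{mol} in the field $\mathbb{C}$ equipped with complex conjugation, equivalently in the real field: subspaces are coded by Hermitian idempotent matrices and the order and $\perp$ by polynomial conditions, so each lattice sentence becomes a sentence of real-closed fields, decidable by Tarski; decidability of $\sf{QL}(\mathbb{C}^n)$ is the equational special case. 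For the limit theory $\bigcap_n\sf{QL}(\mathbb{C}^n)=\sf{QL}(CG(\mathbb{C}))$ I would run this interpretation uniformly in $n$ and use coordinatization to rephrase validity of an equation as the solvability of an associated system over $\mathbb{C}$, again decidable by Tarski.

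The main obstacle is the passage from all finite dimensions to the limit. The analytic step $\bigcap_n\sf{QL}(\mathbb{C}^n)\subseteq\sf{QL}(CG(\mathbb{C}))$, and especially the factor-independence of $\sf{QL}(\mc{R})$, are delicate precisely because naive approximation fails outside the hyperfinite case and must be replaced by coordinatization. Moreover, even granting the equality, deciding the \emph{infinite} intersection is not automatic, since a countable intersection of decidable theories need not be decidable; the real content is that coordinatization over $\mathbb{C}$ turns ``valid in every dimension'' into a single decidable condition on the field, bounding the dimension in which any counterexample must already appear.
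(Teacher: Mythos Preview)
Your plan is correct and tracks the paper's own treatment, which is a survey citing the primary sources: the chain and its strictness via Huhn's $n$-distributivity, decidability of each $\mc{C}(\mathbb{C}^n)$ by interpretation in the reals and Tarski (Dunn et al.), the equalities and limit decidability via Herrmann's coordinatization, with the density argument for $CG(\mathbb{C})$ matching Harding's independent proof. You also correctly isolate the genuine obstacles---factor-independence beyond the hyperfinite case and the non-automatic decidability of an infinite intersection---and name coordinatization as their common resolution, exactly as in the cited literature.
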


The containments among the $\sf{QL}(\mathbb{C}^n)$ and $\bigcap\{\sf{QL}(\mathbb{C}^n):n\geq 1\}$ are trivial. That they are strict follows from the fact \cite{Huhn} that $\mc{C}(\mathbb{C}^k)$ is $n$-distributive iff $k\leq n$. The two equalities are established in \cite{Herrmann}. Decidability of the first order theory of each $\mc{C}(\mathbb{C}^n)$ is given in \cite{Dunn} by translating formulas of the \ts{mol} to formulas about $\mathbb{C}$, and using Tarski's theorem \cite{Tarski2} on the decidability of the first order theory of $\mathbb{C}$. This has as a consequence the decidability of the equational theories $\sf{QL}(\mathbb{C}^n)$ for $n\geq 1$. The decidability of the equational theories in the remaining cases was established in \cite{Herrmann}, and independently for $\sf{QL}(CG(\mathbb{C}))$ in \cite{HardingContGeo}.

\begin{defn}
A {\em quasi-equation} is a formula $\forall x_1,\ldots,\forall x_n(s_1=t_1\, \&\, \cdots\, \&\, s_k=t_k\,\longrightarrow\, s=t)$ where $s_1=t_1,\ldots,s_k=t_k$ and $s=t$ are equations. The {\em uniform word problem} for a variety $V$ asks whether there is an algorithm that determines which quasi-equations are valid in $V$.
\end{defn}

The uniform word problem asks if there is a single algorithm that decides when two words are equal for a finitely presented algebra in $V$. Of course, each equation is equivalent to a quasi-equation, so it is (much) more difficult to have a positive solution to the uniform word problem than to have a positive solution to the free word problem.

\begin{thm}
\ts{mol} has unsolvable uniform word problem, as does the variety generated by the projection lattice of any type II$_1$ factor and the variety generated by $CG(\mathbb{C})$.
\end{thm}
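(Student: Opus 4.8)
The plan is to reduce every assertion to the existence of a single finitely presented algebra, living in the variety in question, whose word problem is already unsolvable. First I would record the routine reformulation of the uniform word problem. For any variety $V$ and any quasi-equation
\[
\forall \bar x\,\big(s_1 = t_1\ \&\ \cdots\ \&\ s_k = t_k \longrightarrow s = t\big),
\]
validity in $V$ is equivalent to the assertion that $s=t$ holds in the finitely presented algebra $\langle \bar x \mid s_1 = t_1,\ldots,s_k = t_k\rangle_V$, since that algebra maps homomorphically into any member of $V$ in which the hypotheses are realized, by sending the generators to the realizing elements. Hence a decision procedure for the uniform word problem of $V$ would in particular decide the word problem of every finitely presented member of $V$; so it suffices to produce, for each variety, one finitely presented algebra in it with unsolvable word problem.

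For \ts{mol} this is immediate. Roddy's finitely presented \ts{mol} with unsolvable word problem \cite{Roddy} lies in \ts{mol} by construction, and its fixed defining relations, paired with a varying target equation, yield a family of quasi-equations whose validity in \ts{mol} cannot be decided.

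For the remaining two varieties I would first collapse them into one. By Theorem~\ref{vbv} the projection lattice of every type II$_1$ factor and the geometry $CG(\mathbb{C})$ share the single equational theory $\bigcap_{n}\sf{QL}(\mathbb{C}^{n})$; since a variety is determined by its equations, all of these generate one variety, which I will call $W$, and $W$ contains each finite geometry $\mc{C}(\mathbb{C}^n)$. It then remains to exhibit a finitely presented algebra inside $W$ with unsolvable word problem. Here I would exploit coordinatization. Starting from a finitely presented group $G$ with unsolvable word problem, chosen to be an ICC group (which can be arranged while keeping the word problem unsolvable), its group von Neumann algebra $L(G)$ is a type II$_1$ factor, and the $*$-regular ring of operators affiliated with $L(G)$ contains the group ring $\mathbb{C}[G]$. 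Using von Neumann's frame machinery I would translate the finite group presentation into a finite set of lattice relations, so that generators and relations of $G$ become coordinates and identities in a finitely presented \ts{mol}, with the word $w=_G e$ mirrored by a lattice equation in that presented algebra. Soundness of the translation holds because the coordinatization identities are valid throughout $W$; faithfulness — that distinct elements of $G$ are not identified — is read off from the concrete representation inside the projection lattice $\mc{P}(L(G))$, where the continuous dimension of the II$_1$ factor supplies enough room to represent all of $G$. This produces a finitely presented member of $W$ whose word problem encodes, hence inherits, the unsolvable word problem of $G$.

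The main obstacle is precisely this last step, and in particular keeping the construction inside the proper subvariety $W$ rather than inside all of \ts{mol}. Over a single finite geometry $\mc{C}(\mathbb{C}^n)$ the coordinate ring of a spanning frame is just $\mathbb{C}$, whose theory is decidable, so the unsolvability cannot be localized to one geometry; it must instead be arranged so that the translated relations are simultaneously realizable by genuine projections across the whole family defining $W$, and so that the lattice order reflects group equality faithfully rather than merely consistently. This is where the coordinatizing $*$-regular ring and the continuous dimension function are indispensable, and where the argument of \cite{Herrmann} carries the real weight; by comparison the reformulation of the uniform word problem in terms of finitely presented word problems is routine.
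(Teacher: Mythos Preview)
Your proposal is correct and matches the paper's own treatment, which consists of exactly two citations: Roddy \cite{Roddy} for \ts{mol}, and Herrmann \cite{Herrmann} for the projection lattices of type~II$_1$ factors and for $CG(\mathbb{C})$. Your preliminary reduction of the uniform word problem to the existence of a single finitely presented algebra with unsolvable word problem is the standard one, and your use of Theorem~\ref{vbv} to collapse the second family of varieties into a single variety $W$ is a clean observation that the paper leaves implicit.

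One small caution: the specific mechanism you sketch for Herrmann's argument --- an ICC group with unsolvable word problem, its group von~Neumann algebra, and a frame-coordinatization transfer --- is a plausible route, but you should not assume it is Herrmann's actual path; his argument works more directly through the interpretation of $*$-regular rings in complemented modular lattices. Since you explicitly defer the heavy lifting to \cite{Herrmann}, this does not create a gap in your proposal, but if you intend to expand the sketch into a self-contained proof you will need to verify the soundness direction carefully: that $w=e$ in $G$ forces the corresponding lattice identity in \emph{every} realization in $W$, not just in $\mc{P}(L(G))$. Also, your side remark that ``the unsolvability cannot be localized to one geometry'' because the coordinate ring is $\mathbb{C}$ conflates decidability of $\sf{Th}(\mc{C}(\mathbb{C}^n))$ with solvability of the uniform word problem for the \emph{variety} it generates; quasi-equations are not preserved under quotients, so decidability of the generator's first-order theory does not by itself settle the uniform word problem for the variety.
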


The first statement was shown by Roddy, who gave a finitely presented \ts{mol} with unsolvable word problem. The second statement is in \cite{Herrmann}.

\begin{prop}
The first order theory of $\mc{C}(\mathbb{C}^n)$ is finitely axiomatizable iff $n=1$. The first order theory of the closed subspaces $\mc{C}(\mc{H})$ of an infinite-dimensional Hilbert space is not finitely axiomatizable.
\end{prop}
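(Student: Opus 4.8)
The plan is to treat the four regimes $n=1$, $n=2$, $n\geq 3$, and $\dim\mc{H}=\infty$ separately, organized for the first three around the standard ultraproduct criterion for non-finite-axiomatizability: if $T=\sf{Th}(A)$ is the complete theory of a single structure $A$, and if there are structures $B_i$ with $B_i\not\models T$ for all $i$ together with an ultrafilter $\mc{U}$ such that $\prod_{\mc{U}}B_i\models T$, then $T$ is not finitely axiomatizable. Indeed, were $T$ axiomatized by one sentence $\sigma$, the {\L}o\'s theorem applied to $\sigma$ would force some $B_i\models\sigma$, hence $B_i\models T$. The case $n=1$ is immediate in the opposite direction: $\mc{C}(\mathbb{C})$ is the two-element ortholattice, a finite structure in a finite language, and the complete theory of any finite structure is axiomatized by a single sentence describing its operation tables. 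For $n=2$ the lattice $\mc{C}(\mathbb{C}^2)$ is MO$_\kappa$ of height two with $\kappa=2^{\aleph_0}$ atoms, the orthocomplementation pairing each one-dimensional subspace with its orthogonal. I would take $B_j=$ MO$_{2j}$, the finite height-two ortholattices whose $2j$ atoms are paired by a fixed-point-free involution. Each $B_j\not\equiv\mc{C}(\mathbb{C}^2)$, since $\mc{C}(\mathbb{C}^2)$ satisfies ``there exist at least $m$ atoms'' for every $m$ while $B_j$ has exactly $2j$; yet for nonprincipal $\mc{U}$ the ultraproduct is again a height-two ortholattice with a fixed-point-free pairing of its now infinitely many atoms, and any two such structures are elementarily equivalent (by quantifier elimination, as for infinite sets with a fixed-point-free involution), so $\prod_{\mc{U}}B_j\equiv\mc{C}(\mathbb{C}^2)$. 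This case is pure counting of atoms and needs no field theory.

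For $n\geq 3$ the same criterion applies, but the relevant non-finitely-axiomatizable ``coordinate'' theory is now that of a field rather than a bare set. I would exploit that the theory of real closed fields is not finitely axiomatizable to choose ordered fields $F_j\not\equiv\mathbb{R}$ whose ultraproduct is real closed, and let $L_j$ be the ortholattice of all subspaces of $F_j(i)^n$, with orthocomplementation from the positive-definite Hermitian form $\sum_k x_k\overline{y_k}$; this form is anisotropic because $F_j$ is formally real, so $L_j$ is a genuine height-$n$ modular ortholattice. By the fundamental theorem of projective geometry, carried out first-order, the coordinate field-with-involution is interpretable in the orthocomplemented lattice (the converse translation being Dunn's \cite{Dunn}), so $L_j\not\equiv\mc{C}(\mathbb{C}^n)$ because $F_j\not\equiv\mathbb{R}$. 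On the other hand the subspace-ortholattice construction in a fixed finite dimension commutes with ultraproducts, so $\prod_{\mc{U}}L_j$ is the subspace ortholattice of $\big(\prod_{\mc{U}}F_j\big)(i)^n$, and since $\prod_{\mc{U}}F_j$ is real closed we have $\big(\prod_{\mc{U}}F_j\big)(i)\equiv\mathbb{C}$ as fields-with-conjugation, whence $\prod_{\mc{U}}L_j\equiv\mc{C}(\mathbb{C}^n)$. The criterion then gives non-finite-axiomatizability.

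The infinite-dimensional case $\mc{C}(\mc{H})$ requires a genuinely different input. The naive attempt to present it as an ultraproduct of the $\mc{C}(\mathbb{C}^n)$ fails, since each $\mc{C}(\mathbb{C}^n)$ is modular, modularity is a single equation preserved under ultraproducts, yet $\mc{C}(\mc{H})$ is not modular; and a bi-interpretation with a field is unavailable because $\mc{C}(\mc{H})$ carries far more structure than its coordinate field. Instead I would argue through undecidability: a theory that is both complete and finitely axiomatizable is decidable, since from a single axiom one can search in parallel for a proof of $\varphi$ or of $\neg\varphi$, and completeness guarantees one terminates. As $\sf{Th}(\mc{C}(\mc{H}))$ is the complete theory of one structure, it therefore suffices to show it is undecidable. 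This I would obtain by interpreting a finitely axiomatized essentially undecidable fragment of arithmetic in $\mc{C}(\mc{H})$: the finite-dimensional intervals interpret the real coordinate field as in the $n\geq 3$ case, and this is augmented by a first-order definable scale of finite dimensions (for instance via a definable sequence of pairwise orthogonal atoms with ``adjoin an orthogonal atom'' as successor) to recover $\mathbb{N}$ with its arithmetic; undecidability of $\sf{Th}(\mc{C}(\mc{H}))$ then yields non-finite-axiomatizability.

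The main obstacle throughout is the first-order recovery of the coordinate structure, and in particular of the \emph{real} structure encoded by the orthocomplementation rather than by mere incidence. For $n\geq 3$ this is what certifies $L_j\not\equiv\mc{C}(\mathbb{C}^n)$, and it must be arranged so that a field sentence separating $F_j$ from $\mathbb{R}$ translates into a concrete lattice sentence; for $\mc{C}(\mc{H})$ the delicate additional point is defining the scale of finite dimensions needed for the arithmetic interpretation, finiteness itself not being first-order expressible, so that one must instead isolate the standard cut by a definable well-ordering of an orthogonal family of atoms.
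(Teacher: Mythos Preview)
Your treatment of $n=1$ and $n=2$ is correct and essentially identical to the paper's. For $n\geq 3$ you take a genuinely different but valid route: the paper does not run an explicit ultraproduct but instead invokes the first-order coordinatization of $\mathbb{C}$ from $\mc{C}(\mathbb{C}^n)$ via projective geometry, arguing that a finite axiomatization of $\sf{Th}(\mc{C}(\mathbb{C}^n))$ would yield one for $\sf{Th}(\mathbb{C})$, which is impossible since every sentence true in $\mathbb{C}$ holds in algebraically closed fields of all sufficiently large prime characteristic. Your construction over non-real-closed formally real coordinate fields $F_j$ is heavier machinery but has the merit of making the ultraproduct criterion explicit and uniform; the paper's argument is shorter but leans on the two-way interpretation with the field.

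The infinite-dimensional case is where your proposal has a real gap. You aim to derive non-finite-axiomatizability from undecidability of $\sf{Th}(\mc{C}(\mc{H}))$, and you propose to obtain undecidability by interpreting arithmetic via ``a first-order definable scale of finite dimensions'' or ``a definable well-ordering of an orthogonal family of atoms.'' Neither of these is available: there is no first-order definable well-ordering of an infinite orthogonal family in $\mc{C}(\mc{H})$, and you yourself note that finiteness is not first-order. You have essentially flagged the obstacle without resolving it, so the argument does not go through as written. More to the point, you have missed the elementary reduction the paper uses. Since $\mc{H}$ is infinite-dimensional, $\mc{C}(\mc{H})$ contains an element $p$ of height~$3$, the condition ``$p$ has height~$3$'' is a first-order formula, and the interval $[0,p]$ is isomorphic to $\mc{C}(\mathbb{C}^3)$. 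The coordinatization of the field can therefore be carried out entirely inside such an interval, and the $n\geq 3$ argument applies directly. Your remark that ``a bi-interpretation with a field is unavailable because $\mc{C}(\mc{H})$ carries far more structure than its coordinate field'' is exactly where you went wrong: one does not need to interpret $\mc{C}(\mc{H})$ back in the field, only to run the $n=3$ argument inside a definable height-$3$ interval.
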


Since $\mc{C}(\mathbb{C})$ is the 2-element Boolean algebra, the case $n=1$ is trivial. For $n\geq 3$ one can recover the field $\mathbb{C}$ from the \ts{mol} $\mc{C}(\mathbb{C}^n)$ by the standard lattice-theoretic treatment of the usual techniques from projective geometry (see eg. \cite{Crawley,Faure}). Since this process is first order, a finite axiomatization of the first order theory of $\mc{C}(\mathbb{C}^n)$ would give a finite axiomatization of the first order theory of the field $\mathbb{C}$. This is not possible since any sentence true in $\mathbb{C}$ is true in algebraically closed fields of sufficiently large prime characteristic (see. eg. \cite[p.~2]{Marker}). For further discussion of the case $n=3$, see \cite{HerrmannNote}.

For the case $n=2$, note that $\mc{C}(\mathbb{C}^2)$ is MO$_\mathfrak{c}$ where $\mathfrak{c}$ is the continuum. There is a non-principle ultraproduct of the MO$_n$ for $n\in\mathbb{N}$ of cardinality $\mathfrak{c}$ \cite{Frayne}, and by \L o\'{s}'s Theorem, this ultrapower must be MO$_\mathfrak{c}$. If $\sf{Th}($MO$_\mathfrak{c})$ can be finitely axiomatized, then it can be axiomatized by a single sentence $\varphi$. But them each MO$_n\models\neg\varphi$ and by \L o\'{s}'s theorem, the ultraproduct MO$_\mathfrak{c}\models\neg\varphi$, an impossibility.

If $\mc{H}$ is infinite-dimensional, then $\mc{C}(\mc{H})$ has an element $p$ of height 3, and $[0,p]$ is isomorphic to $\mc{C}(\mathbb{C}^3)$. The result follows from the previous ones.

\begin{rem}
At this point, there are many questions related to this line of investigation. Several are raised in \cite{HerrmannNote}. We resist the temptation to formulate more problems related to the current material, but we will post a further problem raised in \cite{DunnIntro}.
\end{rem}

\begin{problem}\label{CH}
Is the equational theory of the \ts{oml} $\mc{C}(\mc{H})$ decidable?
\end{problem}

This problem is one way to address what has been a primary issue since the early days of quantum logic, understanding more deeply the \ts{ol} $\mc{C}(\mc{H})$. Birkhoff and von Neumann \cite{BvN} knew that $\mc{C}(\mc{H})$ did not belong to \ts{mol}. Husimi \cite{Husimi} formulated the orthomodular law that separated the equational theory of $\mc{C}(\mc{H})$ from that of \ts{ol}. Day introduced the ``ortho-Arguesian law'', an equation in six variables related to the Arguesian condition of projective geometry. He showed that this equation is valid in $\mc{C}(\mc{H})$ and not in \ts{oml}. Many refinements of this ortho-Arguesian identity have been found \cite{Megil1,Megil2} providing other equations valid in $\mc{C}(\mc{H})$ and not in \ts{oml}. A further source of equations valid in $\mc{C}(\mc{H})$ and not in \ts{oml} is provided by the fact that $\mc{C}(\mc{H})$ has an ample supply of well-behaved states \cite{GreechieStates,Mayet1,Mayet2}.

In a different direction, Fritz \cite{Fritz} used results of Slofstra \cite{Slofstra} from combinatorial group theory to establish the following.

\begin{thm}
For an infinite-dimensional Hilbert space $\mc{H}$, the uniform word problem for the variety generated by $\mc{C}(\mc{H})$ is unsolvable.
\end{thm}

The result shown is actually quite a bit more specific than this, showing that there is no decision procedure for quasi-equations of a very specific form. These quasi-equations encode when certain configurations can be embedded into $\mc{C}(\mc{H})$, and a discussion of them naturally leads to the topic of our next section.

% When is an n-loop of height 3 a subalgebra of C(H)? If it isn't when it closes but the infinity loop (that doesn't close) is good, then the first order theory is not finitely axiomatizable.

% Does OL have solvable uniform word problem? Lat does! (so does OL by fep).

% What is the vN algebra generated by the horizontal sum of two copies of 2^3 in C(H). I know (?) this is in here by the result of Anatolij that projections for position and momentum always meet and join to 0

\section{Embedding problems} \label{embedding}

An embedding of one \ts{oml} into another is a one-one \ts{ol} homomorphism. Our first problem is formulated an an open-ended fashion, but accurately reflects the intent.

\begin{problem}
Increase our understanding of which \ts{oml}s can be embedded into $\mc{C}(\mc{H})$ for some Hilbert space $\mc{H}$.
\end{problem}

Note that each $\mc{C}(\mathbb{C}^n)$ for $n\geq 1$ embeds into $\mc{C}(\mc{H})$ for $\mc{H}$ an infinite-dimensional separable Hilbert space. So our interest primarily lies in embeddings into $\mc{C}(\mc{H})$ when $\mc{H}$ is infinite-dimensional.

The discussion at the end of Section~\ref{jj} gives a number of equations that are valid in $\mc{C}(\mc{H})$ but not valid in all \ts{oml}s. These include the ortho-Arguesian law and its variants, and also equations holding in all \ts{oml}s with a sufficient supply of certain types of states. Failure of any such equation in an \ts{oml} implies that it cannot be embedded into $\mc{C}(\mc{H})$. Also, $\mc{C}(\mc{H})$ has a strongly order determining set of states, meaning that $a\leq b$ iff each finitely additive state $s$ with $s(a)=1$ has $s(b)=1$. Any \ts{oml} without a strongly order determining set of states cannot be embedded into $\mc{C}(\mc{H})$.

For the other side of the question, it seems that relatively little is known about methods to determine that a given \ts{oml} can be embedded into $\mc{C}(\mc{H})$. The projections of a von Neumann algebra are an \ts{oml} that can be embedded into $\mc{C}(\mc{H})$, so in a sense this provides a source of examples. There is also an example \cite{BrunsRoddy} of a \ts{mol} with interesting properties constructed as a subalgebra of $\mc{C}(\mc{H})$. This example is constructed by carefully choosing bases of infinite-dimensional subspaces of $\mc{C}(\ell^2)$ and using delicate arguments. Aside from relatively simple cases that can easily be seen to embed in $\mathbb{C}(\mathbb{C}^n)$ for some $n$, we are aware of few positive results in this direction. To illustrate the situation, consider the following.

\begin{defn}\label{aqw}
The diagram below at left is the \ts{oml} constructed as the {\em horizontal sum} of the Boolean algebras $2^3$ and $2^2$ and is written $2^3\oplus 2^2$. The diagram in the middle consists of a family of $n$ copies of a 3-element Boolean algebra glued together at an atom, coatom, and 0,1 as shown. This is called an $n$-{\em element chain} because of its appearance when viewed as a Greechie diagram. The diagram at right is obtained from the one in the middle by identifying the two copies of $a$ and the two copies of $a'$. This is called an $n$-{\em loop}. This is an \ts{omp} when $n\geq 4$ and an \ts{oml} when $n\geq 5$.
\end{defn}
\vspace{1ex}

\begin{center}
\begin{tikzpicture}[Q/.style={circle,fill=black,inner sep=0pt,minimum size=3pt}]
\node[Q] (0) at (0,0) {}; \node[Q] (a) at (-.5,.5) {}; \node[Q] (b) at (-1,.5) {}; \node[Q] (c) at (-1.5,.5) {}; \node[Q] (a') at (-.5,1) {}; \node[Q] (b') at (-1,1) {}; \node[Q] (c') at (-1.5,1) {}; \node[Q] (1) at (0,1.5) {}; \node[Q] (d) at (.5,.75) {}; \node[Q] (d') at (1,.75) {};
\draw (0)--(a)--(b')--(c)--(a')--(b)--(c')--(1)--(b')--(c)--(0)--(b)--(a')--(1)--(d)--(0)--(d')--(1)--(c')--(a);
\end{tikzpicture}
\hspace{3ex}
\begin{tikzpicture}[Q/.style={circle,fill=black,inner sep=0pt,minimum size=3pt}]
\node[Q] (0) at (.5,0) {}; \node[Q] (a) at (-.5,.5) {}; \node[Q] (b) at (-1,.5) {}; \node[Q] (c) at (-1.5,.5) {}; \node[Q] (a') at (-.5,1) {}; \node[Q] (b') at (-1,1) {}; \node[Q] (c') at (-1.5,1) {}; \node[Q] (1) at (.5,1.5) {}; \node[Q] (d) at (0,.5) {}; \node[Q] (e) at (.5,.5) {}; \node[Q] (d') at (0,1) {}; \node[Q] (e') at (.5,1) {}; \node[Q] (g) at (1.5,.5) {}; \node[Q] (h) at (2,.5) {}; \node[Q] (i) at (2.5,.5) {}; \node[Q] (g') at (1.5,1) {}; \node[Q] (h') at (2,1) {}; \node[Q] (i') at (2.5,1) {};
\draw (0)--(a)--(b')--(c)--(a')--(b)--(c')--(1)--(b')--(c)--(0)--(b)--(a')--(1)--(c')--(a);
\draw (0)--(d)--(a')--(e)--(d')--(a)--(e')--(1)--(d')--(e)--(0); \draw (d) -- (e');
\draw (0)--(g)--(i')--(1)--(h')--(i)--(0)--(h)--(i')--(1)--(g')--(h)--(0)--(g)--(h');
\node at (1,.75) {$\cdots$};
\end{tikzpicture}
\hspace{3ex}
\begin{tikzpicture}[Q/.style={circle,fill=black,inner sep=0pt,minimum size=3pt}]
\node[Q] (0) at (.5,0) {}; \node[Q] (a) at (-.5,.5) {}; \node[Q] (b) at (-1,.5) {}; \node[Q] (c) at (-1.5,.5) {}; \node[Q] (a') at (-.5,1) {}; \node[Q] (b') at (-1,1) {}; \node[Q] (c') at (-1.5,1) {}; \node[Q] (1) at (.5,1.5) {}; \node[Q] (d) at (0,.5) {}; \node[Q] (e) at (.5,.5) {}; \node[Q] (d') at (0,1) {}; \node[Q] (e') at (.5,1) {}; \node[Q] (g) at (1.5,.5) {}; \node[Q] (h) at (2,.5) {}; \node[Q] (i) at (2.5,.5) {}; \node[Q] (g') at (1.5,1) {}; \node[Q] (h') at (2,1) {}; \node[Q] (i') at (2.5,1) {};
\draw (0)--(a)--(b')--(c)--(a')--(b)--(c')--(1)--(b')--(c)--(0)--(b)--(a')--(1)--(c')--(a);
\draw (0)--(d)--(a')--(e)--(d')--(a)--(e')--(1)--(d')--(e)--(0); \draw (d) -- (e');
\draw (0)--(g)--(i')--(1)--(h')--(i)--(0)--(h)--(i')--(1)--(g')--(h)--(0)--(g)--(h');
\node at (1,.75) {$\cdots$};
\node at (-1.75,.25) {$a$}; \node at (-1.7,1.25) {$a'$}; \node at (2.75,.25) {$a$}; \node at (2.8,1.25) {$a'$};
\end{tikzpicture}
\end{center}

We are not aware if it is known when an $n$-chain or $n$-loop can be embedded as an \ts{oml} into $\mc{C}(\mc{H})$. It would be desirable to have technology sufficient to answer such basic questions, even if it winds up being undecidable when a finite \ts{oml} can be embedded into $\mc{C}(\mc{H})$. For the \ts{oml} $2^3\oplus 2^2$, it is remarked in \cite{GreechieI} that Ramsey had shown that it could be embedded into $\mc{C}(\mc{H})$, but the result is unpublished and we know of no proof in print. We add this below.

\begin{prop}\label{kkm}
There is an embedding of the \ts{oml} $2^3\oplus 2^2$ into $\mc{C}(\mc{H})$ where $\mc{H}$ is a separable Hilbert space.
\end{prop}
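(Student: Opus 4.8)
The plan is to exhibit the embedding concretely, reducing it to a transversality (``general position'') statement about two orthogonal direct-sum decompositions of $\mc{H}$, and then to build such decompositions by hand. Recall that $2^3\oplus 2^2$ has two blocks: a copy of $2^3$ with atoms $a_1,a_2,a_3$ and a copy of $2^2$ with atoms $b_1,b_2=b_1'$, sharing only $0,1$; any nontrivial element of one block meets any nontrivial element of the other in $0$ and joins with it to $1$. So to define an \ts{ol}-embedding $\varphi$ it suffices to choose nonzero closed subspaces giving orthogonal decompositions $\mc{H}=A_1\oplus A_2\oplus A_3=B_1\oplus B_2$ and to set $\varphi(a_i)=A_i$, $\varphi(b_j)=B_j$. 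Each block then maps isomorphically onto a Boolean subalgebra of $\mc{C}(\mc{H})$ (here $\varphi(a_i')=A_i^\perp=A_j\oplus A_k$), and $\varphi$ is a well-defined \ts{ol}-homomorphism exactly when the cross-block meets and joins are correct, i.e.\ when $X\cap Y=0$ and $\overline{X+Y}=\mc{H}$ for every $X$ among $A_1,A_2,A_3$ and their orthocomplements and every $Y\in\{B_1,B_2\}$.

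First I would simplify this list of conditions. Writing $\mc{A}=\{A_1,A_2,A_3,A_1^\perp,A_2^\perp,A_3^\perp\}$ and $\mc{B}=\{B_1,B_2\}$, both families are closed under ${}^\perp$ (as $A_i^{\perp\perp}=A_i$ and $B_1^\perp=B_2$). Since $\overline{X+Y}=\mc{H}$ is equivalent to $X^\perp\cap Y^\perp=0$, every ``join $=1$'' requirement for a pair $(X,Y)$ is literally the ``meet $=0$'' requirement for the pair $(X^\perp,Y^\perp)\in\mc{A}\times\mc{B}$. Hence it is enough to arrange the twelve meet conditions $X\cap Y=0$ for $X\in\mc{A}$, $Y\in\mc{B}$; the join conditions, and injectivity of $\varphi$, then come for free.

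Next I would observe that no finite-dimensional $\mc{H}$ can work, which both explains the difficulty and tells us where to look. If $\dim\mc{H}=N<\infty$ with $\dim A_i=a_i$ and $\dim B_1=b_1$, then $A_i\cap B_2=0$ forces $a_i\le b_1$ and $A_i^\perp\cap B_1=0$ forces $b_1\le a_i$, hence all $a_i=b_1=:a$ and $N=3a$; but $A_i^\perp\cap B_2=0$ forces $(N-a)+(N-a)\le N$, i.e.\ $N\le 2a=\tfrac23 N$, which is absurd. So $\mc{H}$ must be infinite-dimensional (separable will suffice). More pointedly, the same fibrewise dimension count shows the construction cannot be a direct integral with constant, or even varying, finite-dimensional fibres: a coatom $A_i^\perp$ and a ``large'' $B_j$ would meet in every fibre, producing a nonzero global intersection. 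Thus the two decompositions must be genuinely coupled across the infinite index.

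Finally, for the construction I would take $\mc{H}=\mc{K}\oplus\mc{K}\oplus\mc{K}$ with $\mc{K}$ separable infinite-dimensional, let $A_1,A_2,A_3$ be the three coordinate copies, and build $B_1$ (with $B_2=B_1^\perp$) as an operator-coupled graph: fix an injective operator $T$ on $\mc{K}$ with dense range and no eigenvectors (for instance multiplication by $t$ on $\mc{K}=L^2[0,1]$, or a suitable weighted shift), and use $T$ to knit the three coordinates together so that the binding conditions $A_i^\perp\cap B_l=0$---equivalently, injectivity of each coordinate projection $\pi_i$ on each of $B_1,B_2$---become assertions that a certain $T$-equation has no nonzero $\ell^2$ solution. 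The main obstacle is precisely to secure these coatom conditions for both complementary pieces simultaneously: the dimension count above shows they cannot hold fibrewise, so one must exploit the absence of eigenvectors and the aperiodicity of $T$ to rule out nonzero solutions, and checking this injectivity (rather than the easier atom conditions $A_i\cap B_l=0$) is the technical heart of the argument. The remaining bookkeeping---that the decompositions are orthogonal, that the smaller intersections vanish, and that $\varphi$ is then an embedding---is routine once the coatom transversalities are in hand.
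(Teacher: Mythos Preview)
Your reduction is clean and correct: the problem is exactly to find two orthogonal decompositions $\mc{H}=A_1\oplus A_2\oplus A_3=B_1\oplus B_2$ such that each coordinate projection $\pi_i$ is injective on each $B_l$ (and on each $A_j$ the projection to $B_l$ is injective). Your observation that no finite-dimensional $\mc{H}$ works, and indeed that no direct integral with finite fibres works, is right and nicely isolates the real difficulty.

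The gap is that you stop precisely where the content lies. ``Use $T$ to knit the three coordinates together'' is not yet a construction, and the obvious candidates fail. For example, with $B_1$ the graph $\{(f,Tf,T^2f)\}$ one has $B_2=B_1^\perp=\{(g_1,g_2,g_3):g_1+T^*g_2+(T^*)^2g_3=0\}$, and then $A_1^\perp\cap B_2$ contains every $(0,-T^*h,h)$, so the coatom condition fails for $B_2$. Any ``rank one'' graph of $B_1$ over a single copy of $\mc{K}$ runs into the same problem: the orthogonal complement is two-dimensional over $\mc{K}$ and always meets some $A_i^\perp$. You have correctly named the obstacle---securing the coatom transversalities for \emph{both} $B_1$ and $B_2$ simultaneously---but you have not produced a pair $(B_1,B_2)$ that clears it, and nothing in the proposal indicates which $T$-construction would.

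By contrast, the paper's proof is concrete and short, and uses a different mechanism. Take $\mc{H}=L^2(\mathbb{R})$; let $A_1,A_2,A_3$ be the subspaces of functions vanishing a.e.\ on $(-\infty,-1)$, $(-1,1)$, $(1,\infty)$ respectively, and let $B_1,B_2$ be the subspaces whose Fourier transforms vanish on $(-\infty,0)$, $(0,\infty)$ respectively (the Hardy spaces $H^2$ of the upper and lower half-planes). The coatom conditions $A_i'\cap B_l=0$ then follow from a classical rigidity fact: a function in $H^2$ is the a.e.\ boundary value of a holomorphic function on a half-plane (Titchmarsh), and such a boundary value cannot vanish on a set of positive measure unless it is identically zero (Luzin--Privalov). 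So the ``genericity'' you are looking for is supplied not by an eigenvector-free operator but by the uniqueness theorem for holomorphic functions. This approach also yields more than the bare statement: it embeds $B\oplus 2^2$ into $\mc{C}(\mc{H})$ for $B$ the measure algebra of $\mathbb{R}$, hence all $2^n\oplus 2^2$ at once.
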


\begin{proof}
Let $\mc{H}=L^2(\mathbb{R})$ be the square integrable complex functions on $\mathbb{R}$ modulo equivalence a.e. and let $\mc{F}$ be the Fourier transform and $\mc{F}^{-1}$ the inverse Fourier transform. Let $A, B, C$ be the closed subspaces of all functions vanishing a.e. on $(-\infty,-1)$, $(-1,1)$, and $(1,\infty)$ respectively; and let $D, E$ be the closed subspaces of all functions whose Fourier transforms vanish a.e. on $(-\infty,0)$ and $(0,\infty)$ respectively. Then $A,B,C$ are the atoms of an 8-element Boolean subalgebra of $\mc{C}(\mc{H})$, and $D,E$ are the atoms of a 4-element Boolean subalgebra of $\mc{C}(\mc{H})$. To establish the result, it is sufficient to show that any of $A',B',C'$ intersect with $D,E$ trivially.

Suppose $f\in D$. Since the Fourier transform $\hat{f}$ of $f$ vanishes a.e. on the negative reals, by Titchmarsh's theorem \cite[Thm.~95]{Titchmarsh} there is a holomorphic function $F$ defined on the upper half-plane so that $f(x)=\lim_{z\to x} F(z)$ a.e. If $f$ belongs to one of $A',B',C'$, then it is zero a.e. on a set of positive measure, so by the Luzin-Privolov theorem \cite{Luzin} is zero a.e. Thus $D$ intersects each of $A',B',C'$ trivially. The argument for $f\in E$ follows since for $g(x)=f(-x)$ we have $\hat{g}(\xi)=\hat{f}(-\xi)$.
\end{proof}

\begin{rem}
The proof of Proposition~\ref{kkm} shows more. It shows that for $B$ the Boolean algebra of Lebesgue measurable subsets of $\mathbb{R}$ modulo sets of measure zero, that $B\oplus 2^2$ is a subalgebra of $\mc{C}(\mc{H})$. Thus $2^n\oplus 2^2$ is a subalgebra of $\mc{C}(\mc{H})$ for each natural number $n$. One might hope that more is true, that for $B$ as described, that $B\oplus B$ is a subalgebra of $\mc{C}(\mc{H})$. This may or may not be the case, but there is a difficulty in extending the proof in the obvious way. In \cite{Kargaev} it is shown that there is a set $E\subseteq\mathbb{R}$ of positive finite measure so that the Fourier transform of its characteristic function vanishes on an interval.
\end{rem}

This line of investigation illustrates the difference between having a \ts{oml} embedding of an \ts{oml} $L$ into $\mc{C}(\mc{H})$ and having an \ts{omp} embedding of $L$ into $\mc{C}(\mc{H})$. Indeed, any horizontal sum of Boolean algebras is {\em concrete} since it has a full set of 2-valued states, and so can easily be embedded as an \ts{omp} into a Boolean algebra. So it is trivial that $2^3\oplus 2^2$ has an \ts{omp} embedding into $\mc{C}(\mc{H})$, as well as many other easy facts. This leads us to our next problem.

\begin{problem}\label{asw}
When can an \ts{omp} be embedded into $\mc{C}(\mc{H})$ for some Hilbert space $\mc{H}$?
\end{problem}

While in many cases it is easy to embed an \ts{omp} into $\mc{C}(\mc{H})$, not all finite \ts{omp}s can be embedded into $\mc{C}(\mc{H})$ since there are finite \ts{omp}s without any states. The most notable work on determining which \ts{omp}s can be embedded into $\mc{C}(\mc{H})$ is from Fritz \cite{Fritz} using work of Slofstra \cite{Slofstra} as we mentioned at the end of Section~\ref{jj}. We describe this in more detail.

\begin{defn}
Let $M$ be an $m\times n$ matrix with coefficients in $\mathbb{Z}_2$ and $b$ be a column vector of length $m$ with coefficients in $\mathbb{Z}_2$. A {\em quantum solution} to a linear equation $Mx=b$ over $\mathbb{Z}_2$ is a sequence $A_1,\ldots,A_n$ of self-adjoint bounded operators of a Hilbert space $\mc{H}$ such that
\begin{enumerate}
\item $A_i^2 = 1$ for each $i\leq n$,
\item $A_i$ and $A_j$ commute if $x_i$ and $x_j$ both appear in some equation,
\item For each each equation $x_{k_1}+\cdots +x_{k_r}=b_r$ we have $A_{k_1}\cdots A_{k_r}=(-1)^{b_r}1$.
\end{enumerate}
\end{defn}

In \cite{Slofstra} it was shown that it is undecidable whether a given linear equation over $\mathbb{Z}_2$ has a quantum solution. Frtiz \cite{Fritz} translated this into a form that begins to resemble the problem of embedding an \ts{omp} into $\mc{C}(\mc{H})$ as we now explain.

\begin{defn}
A {\em hypergraph} is a set $V$ of {\em vertices} and a collection $E\subseteq$ Pow$(V)$ of subsets of $V$ called {\em edges} such that each vertex lies in at least one edge. A {\em quantum representation} of a hypergraph is a mapping $\rho$ from the set of vertices to the projection operators of a Hilbert space $\mc{H}$ with dim$(\mc{H})>0$ such that for any edge $E$ we have $\sum_{v\in E}\, \rho(v) = 1$.
\end{defn}

\begin{rem}
Note that the condition $\sum_{v\in E}\, \rho(v)=1$ implies that $\rho(v)$ is orthogonal to $\rho(w)$ for $v\neq w$ belonging to a common edge. It is however useful to note that being orthogonal does not mean being distinct since $0$ is orthogonal to itself.
\end{rem}

\begin{ex}\label{nh}
A hypergraph with 7 vertices and 3 edges is shown below.
\vspace{2ex}
\begin{center}
\begin{tikzpicture}[Q/.style={circle,fill=black,inner sep=0pt,minimum size=3pt}]
\node[Q] (d) at (0,0) {}; \node[Q] (c) at (-1,0) {}; \node[Q] (e) at (1,0) {}; \node[Q] (a) at (-2,-1) {}; \node[Q] (b) at (-1.5,-.5) {}; \node[Q] (f) at (1.5,-.5) {}; \node[Q] (g) at (2,-1) {}; \draw (0,0) ellipse (1.2 and .3); \draw[rotate around = {45:(-1.5,-.5)}] (-1.5,-.5) ellipse (1 and .22);
\draw[rotate around = {-45:(1.5,-.5)}] (1.5,-.5) ellipse (1 and .22);
 \node at (-2.5,-1) {$e$}; \node at (-1.95,-.25) {$d$}; \node at (-1.25,.35) {$a$}; \node at (0,.55) {$b$}; \node at (1.25,.35) {$c$}; \node at (1.95,-.25) {$f$}; \node at (2.5,-1) {$g$};
\end{tikzpicture}
\end{center}
There are many quantum representations of this hypergraph. Let $i, j, k$ be the standard basis vectors of $\mathbb{C}^3$. Using $P_v$ for the projection onto the one-dimensional subspace spanned by the vector $v$, set
\[ \rho(a) = P_i,\hspace{1ex} \rho(b) = P_j,\hspace{1ex} \rho(c) = P_k,\hspace{1ex} \rho(d) = P_{j+k},\hspace{1ex} \rho(e) = P_{j-k},\hspace{1ex} \rho(f) = P_{i+j}\mbox{ and } \rho(g)=P_{i-j}.\]
Another representation is obtained by setting
\[ \rho(a) = P_i,\hspace{1ex} \rho(b) = P_j,\hspace{1ex} \rho(c) = P_k,\hspace{1ex} \rho(d) = P_{j},\hspace{1ex} \rho(e) = P_{k},\hspace{1ex} \rho(f) = P_{j}\mbox{ and } \rho(g)=P_{i}.\]
Of course, this second representation does not embed the vertex set into the projections, but this is not required. A further representation can be found even in $\mathbb{C}^n$ for any $n\geq 1$. Here we use $0$ and $1$ for projections onto the zero subspace and the whole space. Set $\rho(a)=\rho(c)=\rho(e)=\rho(g)=0$ and $\rho(b)=\rho(d)=\rho(f)=1$.
\end{ex}

A principle contribution of \cite{Fritz} is to provide a translation between quantum solutions of linear equations over $\mathbb{Z}_2$ and quantum representations of hypergraphs. The key result is the following \cite[Lem.~10]{Fritz}.

\begin{thm}
There is an algorithm to compute, for every linear system $Mx=b$, a finite hypergraph so that the quantum solutions of the linear system are in bijective correspondence with the representations of the hypergraph.
\end{thm}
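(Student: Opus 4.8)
The plan is to realize each equation of the system as a measurement context and then to glue these contexts together with small gadgets that force the operators attached to a shared variable to agree. The starting point is the standard dictionary between self-adjoint operators $A$ with $A^2=1$ and projections, namely $A=1-2P$ with $P=(1-A)/2$ the projection onto the $(-1)$-eigenspace, under which commuting involutions correspond to commuting projections.

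First I would treat a single equation $e\colon x_{k_1}+\cdots+x_{k_r}=b_e$ in isolation. By condition (2) the involutions $A_{k_1},\dots,A_{k_r}$ commute, so they are simultaneously diagonalizable with joint spectral projections $\Pi_T$ indexed by the sign pattern $T\subseteq\{k_1,\dots,k_r\}$ recording which $A_{k_j}$ acts as $-1$. Condition (3) says precisely that $\Pi_T=0$ whenever $|T|\not\equiv b_e \pmod 2$, so the projections $\{\Pi_T : |T|\equiv b_e\}$ form a resolution of the identity. I would therefore introduce one vertex $v_{e,T}$ for each admissible pattern $T$ and one edge $E_e=\{v_{e,T}\}$, so that the representation condition $\sum_T \rho(v_{e,T})=1$ captures exactly that these projections form the context of equation $e$. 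Running the dictionary backwards, from any orthogonal family summing to $1$ one recovers observables $A_{k_j}^{(e)}=\sum_{T\not\ni k_j}\rho(v_{e,T})-\sum_{T\ni k_j}\rho(v_{e,T})$, and a direct computation shows these automatically satisfy (1), commute, and satisfy (3) with value $(-1)^{b_e}$; so within one equation all three conditions hold for free.

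The remaining, and central, issue is that a quantum solution carries a single operator $A_i$ per variable, whereas the contexts above produce a separate $A_i^{(e)}$ for each equation $e$ containing $x_i$. I must force these to coincide, equivalently force the marginal projections $P_i^{(e)}=\sum_{T\ni i}\rho(v_{e,T})$ to agree across equations, using only edges of the form ``sum of images is $1$.'' Two gadgets handle this. A marginal vertex $p_i^{(e)}$ together with the edge $\{p_i^{(e)}\}\cup\{v_{e,T}:i\notin T\}$ forces $\rho(p_i^{(e)})=P_i^{(e)}$, since the two groups of summands exhaust the context. Equality of two such marginals $P_i^{(e)}$ and $P_i^{(e')}$ is then imposed by a complementation trick: a two-element edge $\{p_i^{(e')},q_i^{(e')}\}$ forces $\rho(q_i^{(e')})=1-P_i^{(e')}$, and the edge $\{p_i^{(e)},q_i^{(e')}\}$ then reads $P_i^{(e)}+(1-P_i^{(e')})=1$, i.e. $P_i^{(e)}=P_i^{(e')}$. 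Fixing for each variable $x_i$ (which, as we may assume, occurs in some equation) a base equation $e_i$ and linking every other occurrence of $x_i$ to $e_i$ keeps the gadget count linear in the size of the system.

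Finally I would assemble the bijection. Given a quantum solution, set $\rho(v_{e,T})=\Pi_T$ and assign the marginal and complement vertices their forced values; conditions (1)--(3) guarantee every edge sums to $1$, so $\rho$ is a representation. Conversely, given a representation, define $A_i=1-2\rho(p_i^{(e_i)})$ at the base equation $e_i$; the per-equation analysis gives (1)--(3) and the gluing gadgets give consistency, so $(A_i)$ is a solution. That these maps are mutually inverse is immediate in one direction; for the other one must check that a representation is reconstructed from its recovered operators, which reduces to the observation that each $\rho(v_{e,T})$ lies under the joint spectral projection $\Pi_T$ of the $A_{k_j}^{(e)}$ while $\sum_T\rho(v_{e,T})=1=\sum_T\Pi_T$, forcing $\rho(v_{e,T})=\Pi_T$. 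Since the vertex and edge sets are read off mechanically from $M$ and $b$, the construction is algorithmic. The step I expect to be most delicate is exactly this marginal-consistency gluing: arranging that equality of projections arising in different contexts is expressible through resolution-of-identity edges alone, while keeping the hypergraph finite.
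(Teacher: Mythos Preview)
The paper does not actually prove this theorem; it quotes it as Lemma~10 of \cite{Fritz} and the surrounding text only records the dictionary between self-adjoint involutions and projections before remarking that ``a very clever construction is required to match all the requirements.'' So there is no proof in the paper to compare against directly.

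Your proposal does supply such a construction, and it is sound. Encoding each equation $e$ by the hyperedge of its admissible joint spectral projections $\{v_{e,T}:|T|\equiv b_e\}$, extracting the marginal $P_i^{(e)}$ via the auxiliary edge $\{p_i^{(e)}\}\cup\{v_{e,T}:i\notin T\}$, and forcing cross-equation consistency with the two-element complement edges are exactly the gadgets needed; your check that the product $A_{k_1}\cdots A_{k_r}$ collapses to $(-1)^{b_e}$ because only correct-parity patterns appear, and your inverse-map argument that $\rho(v_{e,T})\le\Pi_T$ together with both families summing to $1$ forces equality, are both correct. This is precisely the ``clever construction'' the paper alludes to but does not spell out, and it is in line with the spectral-context approach Fritz uses. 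One small point you gloss over: a variable $x_i$ occurring in no equation still carries a free choice of $A_i$ in a quantum solution, so to keep the correspondence literally bijective you should attach a two-vertex edge $\{p_i,q_i\}$ for each such variable rather than assume it away. With that caveat your argument is complete.
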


The first idea behind the translation is that symmetries of a Hilbert space $\mc{H}$, i.e. bounded self-adjoint operators $A$ with $A^2=1$, are in bijective correspondence with projections of $\mc{H}$. A symmetry $A$ yields the projection $\frac{1}{2}(1+A)$ and a projection $P$ gives the symmetry $2P-1$. In fact, a symmetry $A$ has spectral decomposition $\frac{1}{2}(1+A)-\frac{1}{2}(1-A)$ so corresponds to an experiment with 2 outcomes. Commutivity of symmetries corresponds to that of their associated projections. So some aspects of the translation are relatively straightforward, but a very clever construction of is required to match all the requirements. Combining this with Slofstra's result gives the following.

\begin{cor}
It is undecidable whether a finite hypergraph has a quantum representation.
\end{cor}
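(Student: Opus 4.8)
The plan is to prove undecidability by a computable reduction from the problem, shown undecidable by Slofstra \cite{Slofstra}, of deciding whether a linear system $Mx=b$ over $\mathbb{Z}_2$ admits a quantum solution. The preceding theorem (Fritz's Lemma~10, \cite{Fritz}) already supplies the heart of the argument: an algorithm that, uniformly in the input $(M,b)$, produces a finite hypergraph $G=G(M,b)$ together with a bijection between the quantum solutions of $Mx=b$ and the quantum representations of $G$. So essentially all of the mathematical content is contained in that result, and what remains is to package it as a reduction.

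First I would isolate the only feature of the asserted correspondence that is needed, namely that it preserves nonemptiness: since a bijection sends the empty set to the empty set and a nonempty set to a nonempty set, $Mx=b$ has a quantum solution if and only if $G(M,b)$ has a quantum representation. Thus $(M,b)\mapsto G(M,b)$ is a many-one reduction from the quantum-solvability problem for $\mathbb{Z}_2$-linear systems to the quantum-representability problem for finite hypergraphs, and it is computable precisely because the translation theorem is effective.

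The contradiction is then immediate. Suppose, toward a contradiction, that some algorithm $\mathcal{D}$ decided whether an arbitrary finite hypergraph has a quantum representation. Composing the reduction with $\mathcal{D}$, i.e. feeding $G(M,b)$ to $\mathcal{D}$, would decide whether $Mx=b$ has a quantum solution, contradicting Slofstra's theorem; hence no such $\mathcal{D}$ exists. The only point that warrants care is the matching of the two existence problems across the translation: one must confirm that the bijection of the preceding theorem is a bijection of the full solution sets (over all admissible Hilbert spaces), rather than over some fixed space, so that emptiness genuinely transfers, and that the reduction is total. Both are exactly as asserted by the cited results, so I expect no serious obstacle beyond invoking them correctly.
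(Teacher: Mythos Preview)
Your proposal is correct and is exactly the argument the paper intends: the corollary is stated as an immediate consequence of combining the preceding translation theorem with Slofstra's undecidability result, and your reduction spells this out precisely. There is nothing to add.
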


For a given hypergraph, it is a simple matter to encode the conditions for it to have a quantum representation as a conjunction of equations. For example, if $v,w$ appear in a common edge then we require the equation $\rho(v)=\rho(v)\wedge \rho(w)'$ giving their orthogonality, and so forth. Then taking the conjunction of these equations to imply $0=1$ gives a quasi-equation that is valid in $\mc{C}(\mc{H})$ for all $\mc{H}$ of dimension greater than $0$ iff the hypergraph has no quantum representation. This yields the following result that was stated at the end of the previous section.

\begin{thm}
For an infinite-dimensional Hilbert space $\mc{H}$, the uniform word problem for the variety generated by $\mc{C}(\mc{H})$ is unsolvable.
\end{thm}

\begin{rem}
There is a connection between \ts{omp}s and hypergraphs. For simplicity, we restrict discussion to finite \ts{omp}s, but many things hold in a wider setting. There are several ways to attach a hypergraph to a finite \ts{omp} $P$. One way takes all elements of $P$ as vertices and uses as edges all pairwise orthogonal subsets of $P$ with join 1. Another way of doing this, an extension of ``Greechie diagrams'' \cite{Kalmbach} takes as vertices the atoms of $P$ and takes as edges those pairwise orthogonal sets of atoms with join 1. The hypergraph shown in Example~\ref{nh} is the 3-chain of Definition~\ref{aqw} considered as an \ts{omp}.
\end{rem}

One easily gains a feeling that the work on quantum solutions of linear systems and representations of hypergraphs has implications for the problem of embedding finite \ts{omp}s into $\mc{C}(\mc{H})$. But the situation is not so clear. Is it decidable whether members of the very special class of hypergraphs that arise from finite \ts{omp}s are representable? On the other hand, representability of the hypergraph of an \ts{omp} does not imply that it is embedable, embedability is equivalent to the existence of an injective representation. Never-the-less, this seems an area worthy of further study.

\begin{problem}
When can an \ts{omp} be embedded into an \ts{oml}?
\end{problem}

To reiterate the setting, an embedding $f:P\to L$ of an \ts{omp} $P$ into an \ts{oml} $L$ is a one-one map that preserves orthocomplementation and finite orthogonal joins. This implies that it preserves bounds and order. The following is given in \cite{harding1}.

\begin{thm}\label{nogo}
The \ts{omp} with Greechie diagram below cannot be embedded into an \ts{oml}.
\end{thm}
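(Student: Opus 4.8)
The plan is to suppose that an embedding $f\colon P\to L$ into an \ts{oml} $L$ exists and to identify $P$ with its image, so that $P$ becomes a sub-orthoposet of $L$ on which the orthocomplementation and all finite orthogonal joins agree with those of $L$. The first step is to record that each block of $P$ — each line of the Greechie diagram, whose atoms are pairwise orthogonal with join $1$ — maps to a finite Boolean subalgebra of $L$ having the same atoms, since $f$ preserves orthogonal joins and orthocomplements. Consequently any two elements carried by a common block commute in $L$, and within that block meets and joins are computed Booleanly.

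The heart of the argument is a computation that traverses the loop of the diagram. At each gluing atom (a vertex shared by two lines) I would invoke the Foulis--Holland theorem: if one of three elements of an \ts{oml} commutes with the other two, the three generate a distributive sublattice. The shared atom commutes with the atoms on each of the two blocks through it, so one may form meets and joins of atoms drawn from the two blocks and manipulate them distributively. Writing each ``far'' atom of a block as the orthocomplement, taken within that block's Boolean subalgebra, of the join of the shared atoms, and then combining orthomodularity with these local distributive laws, one propagates an equality from one block to the next around the loop.

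The key point is that the binary meets and joins appearing in this computation are of atoms lying on distinct, incompatible blocks; these do not exist in $P$ but are forced to exist in $L$, and it is exactly this additional structure, absent in the \ts{omp} $P$, that creates the rigidity forcing a contradiction. Closing the loop produces a relation among the images of the atoms that cannot hold: either two atoms that are distinct in $P$ are forced to coincide, contradicting injectivity of $f$, or an orthogonality relation present in the diagram is destroyed, which is impossible since an embedding both preserves and reflects orthogonality. Either outcome contradicts the existence of $f$.

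I expect the main obstacle to be the loop computation itself. The delicate part is verifying, at each gluing vertex, that the particular pair of elements one needs to combine genuinely commutes so that Foulis--Holland is applicable, and then organizing the chain of orthomodular identities so that it closes up consistently rather than circularly. Controlling the newly existing binary operations of $L$ — those of non-orthogonal, non-compatible atoms from different blocks, which have no counterpart in $P$ — without implicitly assuming the conclusion is where the real care is needed.
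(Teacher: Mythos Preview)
Your outline has the right ingredients---assume an embedding, use Foulis--Holland to extract distributivity, and exploit the loop structure to force a contradiction---but the actual argument is not carried out, and the sketch you give of how it would go is off in two specific ways.

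First, the paper does not ``traverse the loop'' block by block propagating equalities. Instead it fixes the central vertex $f$ (the one lying on many blocks) and observes that $f$ lies below each of $(a\vee b)$, $(a\vee c)$, $(b\vee d)$, $(c\vee e)$: whenever $\{x,y,z\}$ is a block and $f\perp z$, then $z'=x\vee y$ in $L$ and $f\le x\vee y$. The crux is then to simplify the meet $(a\vee b)\wedge(a\vee c)\wedge(b\vee d)\wedge(c\vee e)$ down to $b\wedge c$. This requires the \emph{extended} Foulis--Holland theorem (distributivity of the sublattice generated by a set $S$ in which every triple contains an element commuting with the other two), applied to $S=\{a,b,c,d,e\}$; the ordinary three-element version you invoke at each gluing vertex is not obviously enough to justify the repeated regroupings. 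Verifying the commuting hypotheses for all triples in $\{a,b,c,d,e\}$ is exactly the ``delicate part'' you anticipate, and it needs to be done rather than asserted.

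Second, the contradiction is not that two distinct atoms coincide or that an orthogonality relation is destroyed. From $f\le b\wedge c\le b$ and the left--right symmetry of the diagram one gets $f\le g$; since $b$ and $g$ lie on the top arc block, $b\perp g$, whence $f\le b\wedge g'=b\wedge b'=0$ forces $f=0$, contradicting that $f$ is an atom. Your proposal never identifies $f$ as the pivot, never writes down the starting inequalities, and names the wrong endgame, so as it stands it is a plan rather than a proof.
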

%\vspace{2ex}

\begin{center}
\begin{tikzpicture}[scale=1.2,Q/.style={circle,fill=black,inner sep=0pt,minimum size=3pt}]
\node[Q] (e) at (0,0) {};
\node[Q] at (-1,.75) {}; \node[Q] at (-2,1.5) {}; \node[Q] at (-1,-.75) {}; \node[Q] at (-2,-1.5) {}; \node[Q] (b) at (-3,.75) {}; \node[Q] (a) at (-4,0) {}; \node[Q] (h) at (-3,-.75) {}; \draw (0,0)--(-2,1.5)--(-4,0)--(-2,-1.5)--(0,0);
\node[Q] at (-2,.25) {}; \node[Q] (w) at (-2.5,.5) {}; \node[Q] at (-1,.125) {};
\node[Q] at (-2,-.25) {}; \node[Q] (v) at (-2.5,-.5) {}; \node[Q] at (-1,-.125) {};
\draw (h)--(-2,-.25)--(e)--(-2,.25)--(b);
\node[Q] at (-2.75,0) {}; \draw (v) [out=135, in=-90] to (-2.75,0) [out=90, in=-135] to (w);
\node at (-3.3,-1) {$c$}; \node at (-4.4,0) {$a$}; \node at (-3.3,1) {$b$}; \node at (-2.35,-.75) {$e$}; \node at (-2.35,.85) {$d$};
\node[Q] at (1,.75) {}; \node[Q] at (2,1.5) {}; \node[Q] at (1,-.75) {}; \node[Q] at (2,-1.5) {}; \node[Q] (b') at (3,.75) {}; \node[Q] (a') at (4,0) {}; \node[Q] (h') at (3,-.75) {}; \draw (0,0)--(2,1.5)--(4,0)--(2,-1.5)--(0,0);
\node[Q] at (2,.25) {}; \node[Q] (w') at (2.5,.5) {}; \node[Q] at (1,.125) {};
\node[Q] at (2,-.25) {}; \node[Q] (v') at (2.5,-.5) {}; \node[Q] at (1,-.125) {};
\draw (h')--(2,-.25)--(e)--(2,.25)--(b');
\node[Q] at (2.75,0) {}; \draw (v') [out=45, in=-90] to (2.75,0) [out=90, in=-45] to (w');
%\node at (3.3,-1) {$z$}; \node at (4.4,0) {$x$};
\node at (3.3,1) {$g$};
\node at (0,-.5) {$f$}; \node[Q] at (0,2.25) {}; \draw (b) [out=90, in=180] to (0,2.25) [out=0, in=90] to (b');
\end{tikzpicture}
\end{center}
\vspace{2ex}

Note that this Greechie diagram has all edges with 3 elements.

\begin{proof}
Suppose that this \ts{omp} is embedded into an \ts{oml} $L$. We recall that the extended Foulis-Holland theorem \cite{distrib} says that if $S$ is a subset of an \ts{oml} and for any 3-element subset of $S$ there is one element that commutes with the other two, then the sublattice generated by $S$ is distributive. Using this and writing $+$ for join and (suppressed) multiplication for meet in $L$ we have

\begin{align*}
f\,&\leq\, (a+b)(a+c)(b+d)(c+e)\\
&=\,[(a+b)(a+c)]\, [(b+a)(b+d)]\, [(c+a)(c+e)] \\
&=\,(a+bc)(b+ad)(c+ae)\\
&=\,[(a+bc)(b+ad)]\, [(a+bc)(c+ae)]\\
&=\,[ab+bc+ad+abcd]\, [ac+bc+ae+abce]\\
&=\,(bc+ad)(bc+ae)\\
&=\,bc+ade\\
&=\,bc
\end{align*}

\noindent From this, we have $f\leq b$, and by symmetry, $f\leq g$, and as $b,g$ are orthogonal, $f=0$.
\end{proof}

We move to what at first seems an unrelated topic.

\begin{defn}
For a set $X$, an ordered pair $(\alpha,\alpha')$ of equivalence relations on $X$ is a {\em factor pair} if $\alpha\cap\alpha'$ is the diagonal relation $\Delta$ on $X$ and the relational product $\alpha\circ\alpha'$ is the universal relation $\nabla=X^2$ on $X$. Let Fact $(X)$ be the set of all factor pairs of $X$.
\end{defn}

The motivation behind factor pairs is that they encode the direct product decompositions of a set. Indeed, factor pairs are exactly the kernels of the projection operators associated to a direct product decomposition $X\simeq Y\times Z$. The following was established in \cite{harding1}.

\begin{thm}
Fact $(X)$ is an \ts{omp} with $0=(\Delta,\nabla), 1=(\nabla,\Delta)$; orthocomplementation given by $(\alpha,\alpha')^\perp = (\alpha',\alpha)$; and $(\alpha,\alpha')\leq(\beta,\beta')$ iff $\alpha\subseteq\beta$, $\beta'\subseteq\alpha'$, and all equivalence relations involved permute.
\end{thm}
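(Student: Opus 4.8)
The plan is to work throughout with the correspondence, noted just above the statement, between a factor pair $(\alpha,\alpha')$ and the two-fold direct product decomposition $X\cong X/\alpha\times X/\alpha'$ given by $x\mapsto([x]_\alpha,[x]_{\alpha'})$. The first thing to record is that the two defining conditions already force $\alpha$ and $\alpha'$ to permute: since $\alpha\circ\alpha'=\nabla$ and $\nabla$ is symmetric, taking converses gives $\alpha'\circ\alpha=(\alpha\circ\alpha')^{-1}=\nabla=\alpha\circ\alpha'$. Thus each factor pair genuinely yields a decomposition, and injectivity ($\alpha\cap\alpha'=\Delta$) and surjectivity ($\alpha\circ\alpha'=\nabla$, with permutability) of the coordinate map are exactly the factor-pair conditions. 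I would then pass freely between the two viewpoints.

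With this in hand the orthocomplemented poset structure is largely bookkeeping. Reflexivity and antisymmetry of the stated order are immediate. Transitivity is the first point needing genuine care: given $(\alpha,\alpha')\le(\beta,\beta')\le(\gamma,\gamma')$ the containments $\alpha\subseteq\gamma$ and $\gamma'\subseteq\alpha'$ are clear, but one must still check that the ``far'' pairs such as $\alpha,\gamma'$ permute, which I would obtain either from the assumed permutabilities together with the intervening containments or, more transparently, from the decomposition picture in which $\le$ expresses a refinement relation that composes. That $(\alpha,\alpha')^\perp=(\alpha',\alpha)$ is a well-defined order-reversing involution is then formal, since both the factor-pair conditions and the order conditions are symmetric under the swap; and $(\Delta,\nabla),(\nabla,\Delta)$ are checked directly to be least and greatest, using that $\Delta$ and $\nabla$ permute with every equivalence relation. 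For $x\wedge x^\perp=0$ I would note that any common lower bound $(\gamma,\gamma')$ of $(\alpha,\alpha')$ and $(\alpha',\alpha)$ satisfies $\gamma\subseteq\alpha\cap\alpha'=\Delta$, while $\gamma'$, being a transitive relation containing both $\alpha$ and $\alpha'$, satisfies $\gamma'\supseteq\alpha\circ\alpha'=\nabla$; hence $(\gamma,\gamma')=0$. Applying the order-reversing involution $\perp$ converts this existing meet into the join $x\vee x^\perp=0^\perp=1$, so $\mathrm{Fact}(X)$ is an \textsc{op}.

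The substance is the orthomodular content, and I would route everything through common three-fold decompositions. Given orthogonal $x=(\alpha,\alpha')$ and $y=(\beta,\beta')$, so that $\alpha\subseteq\beta'$, $\beta\subseteq\alpha'$ and all relations permute, I would set $\rho_P=\alpha'$, $\rho_Q=\beta'$ and $\rho_R=\alpha\vee\beta=\alpha\circ\beta$, and prove that these three pairwise-permuting relations are the factor congruences of a decomposition $X\cong P\times Q\times R$ in which $x$ is the factor pair with second factor $P$ and $y$ the one with second factor $Q$. The candidate join is the factor pair with second factor $P\times Q$, namely $x\vee y=(\alpha\vee\beta,\,\alpha'\cap\beta')$; I would verify it is a factor pair, that it lies above $x$ and $y$, and that any common upper bound dominates it. The orthomodular law $x\le y\Rightarrow y=x\vee(x\vee y')'$ is handled by the same device: $x\le y$ yields a decomposition $X\cong P\times Q\times R$ in which $x$ has second factor $P$ and $y$ has second factor $P\times Q$, whence $y'$ has second factor $R$, $x\vee y'$ has second factor $P\times R$, $(x\vee y')'$ has second factor $Q$, and finally $x\vee(x\vee y')'$ has second factor $P\times Q$, which is exactly $y$.

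I expect the main obstacle to be establishing that the triple $\{\rho_P,\rho_Q,\rho_R\}$ really determines a three-fold decomposition: that these equivalence relations are pairwise permuting, pairwise independent, and jointly trivializing, so that $\rho_P\cap\rho_Q\cap\rho_R=\Delta$ and the relevant compositions recover $\nabla$. This is precisely where the standing hypothesis that all relations permute does the heavy lifting, since permuting equivalence relations generate a modular sublattice of the lattice of all equivalence relations on $X$, and the identities $(\alpha\vee\beta)\cap(\alpha'\cap\beta')=\Delta$ and $(\alpha\vee\beta)\circ(\alpha'\cap\beta')=\nabla$ hinge on modular-law manipulations that would fail without permutability. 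Verifying these identities carefully, rather than the comparatively routine poset and involution axioms, is the heart of the proof.
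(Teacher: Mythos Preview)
The paper does not give a proof of this theorem: it is stated with the sentence ``The following was established in \cite{harding1}'' and no argument is supplied. There is therefore nothing in the present paper to compare your proposal against.

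For what it is worth, your outline is very much in the spirit of the original source \cite{harding1}, which likewise passes back and forth between factor pairs and direct-product decompositions and obtains the orthomodular content via common three-fold refinements. The point you flag about transitivity is real: when $\alpha\subseteq\gamma$ one gets $\alpha\circ\gamma=\gamma=\gamma\circ\alpha$ automatically, and similarly for $\alpha',\gamma'$, but the cross permutabilities such as $\alpha$ with $\gamma'$ do not follow from the containments alone and genuinely require the intermediate decomposition through $(\beta,\beta')$. Your instinct to handle this in the decomposition picture rather than by direct relational manipulation is the right one, and is how the cited reference proceeds.
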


As was first observed by Chin and Tarski \cite{Tarski}, under quite special circumstances, a small fragment of distributivity holds among equivalence relations \cite[Lem.~7.2]{harding1}. Using this, one can establish the following, where an embedding of one \ts{omp} into another is a one-one map that preserves orthocomplementation and finite orthogonal joins.

\begin{thm}\label{nogo1}
The \ts{omp} from Theorem~\ref{nogo} cannot be embedded into an \ts{omp} Fact~$(X)$ for any set $X$.
\end{thm}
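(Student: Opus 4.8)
The plan is to transport the computation from the proof of Theorem~\ref{nogo}, which was carried out in an arbitrary \ts{oml}, into the \ts{omp} Fact$(X)$. Suppose for contradiction that $\varphi$ is an \ts{omp} embedding of the poset $P$ of Theorem~\ref{nogo} into Fact$(X)$, and, as in that proof, abbreviate $\varphi(x)$ simply by $x$. Each such $x$ is a factor pair, and since $\varphi$ preserves orthocomplementation, $x'=x^\perp$ is the factor pair with its two components interchanged. Writing $x=(\theta_x,\theta_{x'})$, the whole argument becomes a computation with the equivalence relations $\theta_x$ on $X$: meets in Fact$(X)$ are intersections of equivalence relations, and joins of orthogonal elements are relational products, the products being equivalence relations precisely because orthogonality in Fact$(X)$ forces the relevant relations to permute.

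The reason this stands a chance is that the proof of Theorem~\ref{nogo} invokes distributivity only in configurations licensed by the extended Foulis--Holland theorem, and each such configuration comes from elements lying on a common edge of the Greechie diagram, that is, from pairwise commuting families. In Fact$(X)$ commuting translates into permutability of the associated equivalence relations, and it is exactly here that the Chin--Tarski fragment of distributivity, recorded as Lemma~7.2 of \cite{harding1}, applies. First I would read off the three-element edges of the diagram which families of the $\theta_x$ pairwise permute. I would then reproduce the chain
\[
f \;\leq\; (a+b)(a+c)(b+d)(c+e)\;=\;\cdots\;=\;bc,
\]
interpreting each sum as a relational product and each juxtaposition as an intersection, and justifying every distributive rewriting by checking that its inputs permute and so fall under the Chin--Tarski lemma.

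The hard part will be the bookkeeping: at each rearrangement one must confirm that the three relations being combined satisfy the permutability hypotheses of Lemma~7.2, and that the intermediate terms (such as $a+bc$ and $bc+ad$, which are not orthogonal joins and hence are not automatically preserved by $\varphi$) actually exist as factor pairs. Existence is itself a consequence of permutability, so these are two faces of one task: one must verify that the commuting pattern of the Greechie diagram propagates through the derived terms, so that every operation encountered is between permuting relations. Granting this, the computation delivers $f \leq bc$, hence $f \leq b$ and, by the left--right symmetry of the diagram, $f \leq g$; since $b \perp g$, i.e.\ $b \leq g'$, we get $f \leq g \wedge g' = 0$, so $f=0$ in $P$, contradicting that $f$ is a nonzero element of the diagram.
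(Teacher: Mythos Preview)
Your proposal is correct and is exactly the approach the paper indicates: the paper does not spell out a proof of Theorem~\ref{nogo1} but states that it is ``nearly identical'' to that of Theorem~\ref{nogo}, replacing the extended Foulis--Holland theorem by the Chin--Tarski fragment of distributivity for permuting equivalence relations \cite[Lem.~7.2]{harding1}, which is precisely the substitution you describe. Your remarks about the bookkeeping---verifying that permutability propagates to the derived terms so that each distributive step is licensed---are accurate and constitute the only real work beyond transcribing the computation.
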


The resemblance between Theorems~\ref{nogo} and \ref{nogo1} goes beyond their statements. Their proofs are nearly identical, using the fragment of distributivity that holds in relation algebras in place of the generalized Foulis-Holland theorem that provides a fragment of distributivity in \ts{oml}s. There are other finite \ts{omp}s that cannot be embedded into \ts{oml}s, and in each case, the proof of non-embedability into an \ts{oml} transfers transparently to a proof of non-embeddability into an \ts{omp} Fact~$(X)$. This raises a number of issues, such as whether Fact~$(X)$ can be embedded into an \ts{oml}, whether every \ts{oml} can be embedded as an \ts{omp} into some Fact~$(X)$, and whether there are further fragments of distributivity in a relation algebra to mirror the situation with the generalized Foulis-Holland theorem.

% What OMLs can be embedded into C(H)
% What OMPs can be embedded into C(H)
% What OMPs can be embedded into OMLs?

% http://us.metamath.org/qleuni/mmql.html#prop

% vN algebras, factors orthorguesian and refinements of this, full set of states,

% When is an n-loop of height 3 a subalgebra of C(H)? If it isn't when it closes but the infinity loop (that doesn't close) is good, then the first order theory is not finitely axiomatizable.

% Mostly negative results, 2^3 oplus 2^2 is a subalgebra

\newpage

\begin{footnotesize}

\end{footnotesize}

\end{document}